\newtheorem{theorem}{Theorem}[section]
\newtheorem{lemma}[theorem]{Lemma}
\newtheorem{proposition}[theorem]{Proposition}
\newtheorem{definition}{Definition}[section]
\newtheorem{corollary}[theorem]{Corollary}
\newtheorem{remark}{Remark}[section]
\newtheorem{main_result}{Main Result}[section]
\def\vp{\varphi}
\def\eq#1{(\ref{#1})}
\def\nn{\nonumber}
\def\({\left(\begin{array}{cccccc}}
\def\){\end{array}\right)}
\def\bes{\begin{eqnarray}}
\def\ees{\end{eqnarray}}
\newcommand{\beq}{\begin{equation}}
\newcommand{\eeq}{\end{equation}}
\newcommand{\bea}{\begin{eqnarray}}
\newcommand{\eea}{\end{eqnarray}}
\newcommand{\beann}{\begin{eqnarray*}}
\newcommand{\eeann}{\end{eqnarray*}}
\newcommand{\lam}{\ensuremath{\lambda}}
\newcommand{\RR}{\mathbb{R}}
\newcommand{\br}{{\bar r}}
\DeclareMathOperator{\supp}{supp}
\DeclareMathOperator{\sgn}{sgn}
\newcommand{\s}{\ensuremath{\mathrm{s}}}
\DeclareMathOperator{\grad}{grad}
\DeclareMathOperator{\dv}{div}
\numberwithin{equation}{section}
\begin{document}

\title{Radially Symmetric Non-isentropic Euler flows: continuous blowup
with positive pressure}

\begin{abstract}
	Guderley's 1942 work on radial shock waves provides cases of self-similar Euler flows 
	exhibiting blowup of primary (undifferentiated) flow variables: a converging 
	shock wave invades a quiescent region, and the velocity 
	and pressure in its immediate wake become unbounded at time of collapse.
	However, these solutions are of border-line physicality: 
	the pressure vanishes within the quiescent region due to vanishing temperature 
	there. It is reasonable that the lack of upstream counter-pressure
	is conducive to large speeds, with concomitant large amplitudes. Based 
	on Guderley's original solutions it is therefore unclear if it is the zero-pressure 
	region that is responsible for blowup. The same applies to self-similar
	Euler flows describing radial cavity flow, first analyzed by Hunter (1960). 
	
	Recent works have shown that the simplified isothermal and isentropic models 
	admit continuous blowup solutions in the presence of a strictly positive pressure field. 
	In this work we extend this conclusion to the case of the full Euler system. The solutions 
	under consideration are radial self-similar flows in which a continuous wave
	focuses and blows up. We propagate the solutions beyond blowup and 
	observe numerically that there are cases where an expanding spherical shock wave 
	is generated at collapse. The resulting solution has the unusual property
	that the flow is isentropic in each of the two regions separated by the shock. We finally 
	verify that these 
	are admissible global weak solutions to the full, multi-d compressible Euler system.
	\\
	
	\noindent
{\bf Key words.} Compressible fluid flow, multi-d Euler system, similarity solutions, radial symmetry, unbounded solutions\\

\noindent
{\bf AMS subject classifications.} 35L45, 35L67, 76N10, 35Q31
\end{abstract}

\author{Helge Kristian Jenssen }\address{H.~K.~Jenssen, Department of
Mathematics, Penn State University,
University Park, State College, PA 16802, USA ({\tt
jenssen@math.psu.edu}).}
\author{Charis Tsikkou}\address{C. Tsikkou, School of Mathematical and Data Sciences, West Virginia University,
Morgantown, WV 26506, USA ({\tt
tsikkou@math.wvu.edu}).}

\date{\today}
\maketitle

\tableofcontents

\section{Introduction}\label{intro}
The full (non-isentropic) compressible Euler system expresses conservation of 
mass, linear momentum, and energy in the absence of viscosity and heat conduction \cite{cf}: 
\begin{align}
	\rho_t+\dv_{\bf x}(\rho \bf u)&=0 \label{mass_m_d_full_eul}\\
	(\rho{\bf  u})_t+\dv_{\bf x}[\rho {\bf  u}\otimes{\bf  u}]+\grad_{\bf  x} p&=0
	\label{mom_m_d_full_eul}\\
	(\rho E)_t+\dv_{\bf x}[(\rho E+p){\bf  u}]&=0.\label{energy_m_d_full_eul}
\end{align}
The independent variables are time $t$ and position ${\bf  x}\in\RR^n$, and the 
dependent variables are density $\rho$, fluid velocity ${\bf  u}$, 
and internal energy $e$; the total energy density is $E=e+\textstyle\frac{1}{2}|{\bf  u}|^2$. 
We restrict attention to ideal gases with pressure $p$ given by
\beq\label{pressure1}
	p(\rho,e)=(\gamma-1)\rho e\qquad\qquad\text{($\gamma>1$ constant)}.
\eeq
Assuming constant specific heat, the
internal energy is proportional to the temperature $\theta$ of the gas: $e\propto\theta$.
The local speed of sound $c$ is given by
\beq\label{sound_speed}
	c=\sqrt{\textstyle\frac{\gamma p}{\rho}}=\sqrt{\gamma(\gamma-1)e}\propto\sqrt\theta.
\eeq
We refer to $\rho$, $\bf u$, $p$, $c$, and $\theta$ as primary (undifferentiated) 
flow variables.

The present work deals with the phenomenon of {\em amplitude blowup}
in multi-d Euler flow. The simplest setting for this is via 
radial flows, i.e., flows in which the variables depend on
position only through $r=|{\bf x}|$, and the 
velocity field is purely radial, viz.\ ${\bf u}=u\frac{\bf x}{r}$. 
With these assumptions, and within smooth regions of the flow, 
\eq{mass_m_d_full_eul}-\eq{energy_m_d_full_eul} take the form
\begin{align}
	\rho_t+u\rho_r+\rho(u_r+\textstyle\frac{mu}{r}) &= 0\label{m_eul}\\
	u_t+ uu_r +\textstyle\frac{1}{\gamma\rho}(\rho c^2)_r&= 0\label{mom_eul}\\
	c_t+uc_r+{\textstyle\frac{\gamma-1}{2}}c(u_r+\textstyle\frac{mu}{r})&=0\label{ener_eul}
\end{align}
where $\rho=\rho(t,r)$, $u=u(t,r)$, $c=c(t,r)$, and $m=n-1$. 

The presence (when $n>1$) of terms with the unbounded geometric factor $\frac{1}{r}$
makes it reasonable that Euler flows can generate infinite amplitudes in 
a radially converging flow. It is natural to ask if there are mechanisms, within the Euler 
model itself, that could prevent such behavior. One candidate would be 
that the fluid near the center of motion provides a counter-pressure which would 
slow down the incoming flow, averting actual blowup.

It turns out that the Euler-system {\em does} admit solutions with infinite
amplitudes. These flows are commonly named after Guderley who was the 
first to study them in \cite{gud}. They have received much attention in the fluid 
dynamics literature, especially in connection to Inertial Confinement Fusion,
see \cites{am,dm,p}. These ``Guderley solutions'' are self-similar solutions (to be described below) of 
the non-isentropic Euler system \eq{mass_m_d_full_eul}-\eq{energy_m_d_full_eul}
for an ideal, polytropic gas. On the other hand, and regardless of their blowup behavior,
they are also at the limit of applicability of the Euler model.

To see why, we recall that Guderley solutions involve a single focusing shock 
wave which moves toward the origin by invading a fluid at rest and at 
constant density and pressure, leaving a dynamically changing flow in its wake \cites{laz,jt1}.
The incoming shock accelerates up to the time of ``collapse,'' hits the origin with infinite speed, 
and is instantaneously reflected into an expanding shock wave. The reflected shock proceeds 
outward as it interacts with the still-incoming flow generated by the original, converging shock. 
At collapse, the velocity, pressure, sound speed 
(and hence also temperature) all attain infinite values at the center of motion.
In contrast, the density is everywhere constant and finite at collapse in Guderley's solutions.

As emphasized by Lazarus (see p.\ 318 in \cite{laz}), for the quiescent state 
within the converging shock to be an exact self-similar solution
of the Euler system, the sound speed there must vanish identically. 
For an ideal gas this implies vanishing pressure and it follows that, in a 
Guderley solution, the fluid near the origin offers no counter-pressure 
for the incoming shock. We stress that the density field within the quiescent region 
is strictly positive. Thus, the vanishing pressure
there is due to a vanishing temperature (``cold gas'') and not a vacuum. 
The flows exhibited by these solutions are therefore of a borderline type 
where the physicality of the Euler model is open to question.
(On the other hand, they do provide genuine global weak solutions to the 
full multi-d Euler system; \cite{jt1}.)

A distinct case of amplitude blowup 
for the Euler system occurs for radial self-similar flow describing a collapsing
cavity, i.e., a spherical vacuum region being filled by an inflowing gas. This type of
solution was first described by Hunter \cite{hun_60}; see also Lazarus \cite{laz} who analyzed 
both cases of self-similar flows in a combined treatment. Thus, in both types of
blowup solution, the unbounded growth occurs in the immediate wake of a wave
which enters a region at zero pressure: in Guderley solutions a shock moves
into a ``cold gas,'' while for cavity flow a free interface expands into a vacuum.
It is reasonable to ask if it is  precisely the lack of a positive counter-pressure 
which allows for this type of blowup behavior.

Recent results on self-similar blowup for the simplified isothermal and isentropic Euler 
models \cites{jt2,jt3,mrrs1,mrrs2,b_c-l_g_s} indicate that the answer is ``No.''
These works provide examples of radially converging isothermal and isentropic flows 
which suffer amplitude blowup in the presence of a strictly positive pressure field. 
{\em The first goal of the present work is to extend this conclusion to the 
full Euler system \eq{mass_m_d_full_eul}-\eq{energy_m_d_full_eul}},
demonstrating that the effect of wave focusing is 
sufficiently strong on its own to generate unbounded values
of primary flow variables.

%
%
%

A common ingredient in \cites{jt2,jt3,mrrs1,mrrs2,b_c-l_g_s}, as well as the 
present work, is the use of a converging wave which is {\em continuous} prior to collapse,
in contrast to Guderley's converging shock solution.

\begin{remark}
        In fact, the works \cites{mrrs1,mrrs2,b_c-l_g_s} construct {\em smooth} 
        self-similar isentropic Euler flows, which are then used to give examples of 
        nearby blowup solutions of the Navier-Stokes system.
        The $C^\infty$ regularity of the underlying Euler flows is critical for the 
        constructions in \cites{mrrs1,mrrs2,b_c-l_g_s}. 
        For our purpose of constructing global-in-time solutions to the 
        full Euler system \eq{mass_m_d_full_eul}-\eq{energy_m_d_full_eul}
        with a strictly positive pressure, the exact regularity of 
        the solution prior to collapse is not essential.
\end{remark}


A key observation in the present work is that, once we settle on the use of a continuous, 
radial self-similar wave to generate blowup for \eq{m_eul}-\eq{ener_eul}, 
certain physical constraints imply that the incoming flow is necessarily isentropic.
More precisely, we show that the requirements of
\begin{enumerate}
	\item locally finite amounts of mass, momentum, and energy at all times; and 
	\item no unbounded amplitudes of primary flow variables occur prior to collapse, 
\end{enumerate}
together imply a constant value of the entropy field in the incoming flow; cf.\ Proposition 
\ref{kappa_value}. (Of course, this 
is under the assumption of radial self-similar flow.)
As a consequence, we can make use of the analysis in \cites{jt3,laz} to build the 
incoming flow. Another consequence of this is that {\em all} the primary flow variables,
including the density (in contrast to the case of Guderley solutions), suffer blowup 
at time of collapse. 

{\em Our second goal is to continue the resulting solutions beyond collapse and 
obtain global-in-space-and-time, admissible weak solutions to the full Euler system 
\eq{m_eul}-\eq{ener_eul}}, an issue not addressed in \cites{mrrs1,mrrs2,b_c-l_g_s}. 
For this we have found it necessary to rely on (robust) numerical evidence that
there are cases where the infinite 
amplitudes in primary flow variables result in the generation of an outgoing self-similar
shock wave. We formulate this as a technical condition (see Condition $(\Pi)$ in Section \ref{t>0})
which requires that the Hugoniot locus of a certain trajectory of the reduced similarity 
ODE \eq{CV_ode} intersects a certain other trajectory.

\begin{remark}
	Figure \ref{Figure_2} (generated with Maple) provides numerical 
	evidence in a representative case: Condition $(\Pi)$ 
	requires that the Hugoniot locus $\Gamma_H$ of trajectory $\Gamma_3$ 
	intersects the trajectory $\Gamma_4$. The jump from the lower point on $\Gamma_3$ to
	the point $P_H\in\Gamma_4$ corresponds to the outgoing shock wave generated at collapse.
	
	We mention another, somewhat more subtle, issue. Namely, numerical tests 
	suggest the existence of a different type of solution: at least for some choices of $n=2,3$,
	$\gamma>1$, and $\lambda>1$ as described by the Main Result, it is possible 
	to build radial similarity solutions that are {\em globally continuous}, save for a one-point
	amplitude 	blowup at time of collapse (see Remark \ref{Pi2}). I.e., there is clear evidence  
	that the presence of unbounded amplitudes of the primary flow variables 
	$(\rho,u,c,p,\theta)$,  at a single point in space-time, 
	does not necessarily lead to shock formation at that point. 
	Indeed, the unlabeled trajectory joining $P_8$ to $P_9$ in Figure \ref{Figure_2}
	corresponds to this type of behavior. A related scenario 
	was recently observed in \cite{jj_22} for locally bounded Euler flows: gradient blowup 
	of primary flow variables does not necessarily lead to shock formation.
	A rigorous analytic justification of these behaviors appears to be challenging 
	and will be pursued elsewhere.
\end{remark}

We note that the physical constraints (1) and (2) above have a further 
consequence which highlights the highly special form of the solutions under consideration: 
not only is the incoming flow field isentropic, but the same applies
to the flow field in the wake of the expanding shock wave generated at collapse.
In particular, the specific entropy takes only two values: one in the region outside the 
expanding shock ($\equiv$ its value in the original, incoming flow), and one in the 
region within the shock.
(As far as we are aware, the only other known example of this behavior is 
that of a constant-speed, planar shock wave connecting two constant states.)
However, we stress that the flows we construct are solutions to the {\em full}
Euler system. Specifically, the two constant values of the specific entropy are necessarily 
distinct, and the solution is not globally isentropic;
see Remark \ref{isentropic_comparison}.

Finally, having obtained globally defined self-similar solutions to \eq{m_eul}-\eq{ener_eul}, 
it remains to verify that they provide genuine weak solutions to the original,
multi-d Euler system \eq{mass_m_d_full_eul}-\eq{energy_m_d_full_eul}. 
Due to the amplitude blowup at the origin, this is not immediate.
For this we require that the conserved variables (mass, momentum, energy) maps time 
continuously into $L^1_{loc}(\RR^n)$, that all terms occurring in the weak form of 
the full, multi-d Euler system are locally integrable in space-time, and 
that the weak forms of \eq{mass_m_d_full_eul}-\eq{energy_m_d_full_eul} 
are satisfied for all test function belonging to $C_c^1(\RR_t\times\RR^n_{\bf x})$
(cf.\ Definition \ref{weak_soln}).
In particular, verification of the weak form of the energy equation requires
estimation of additional, higher order terms that are not present for the isothermal and 
isentropic Euler models.

%
%
%
%

\subsection{Main result and outline}
In stating the main result we make use of terminology that is detailed 
in Section \ref{setup} below. (We abuse notation slightly
in using the same symbols for the density and energy fields in rectangular 
and radial coordinates.) Note that the time of collapse is chosen to be $t=0$.
\begin{main_result}\label{main_result}
	Consider the full, multi-d Euler system 
	\eq{mass_m_d_full_eul}-\eq{energy_m_d_full_eul} for an ideal,
	polytropic gas in space dimension $n=2$ or $n=3$, and consider 
	radial self-similar solutions 
	\beq\label{1_to_multi-d}
		\rho(t,{\bf x}):=\rho(t,|{\bf x}|)\qquad
		{\bf u}(t,{\bf x}):=u(t,|{\bf x}|)\textstyle\frac{{\bf x}}{|{\bf x}|},\qquad
		E(t,{\bf x}):=E(t,|{\bf x}|)\
	\eeq
	given via \eq{alt_sim_vars}, with $V$ and $C$ solving the similarity ODEs \eq{V_sim2}-\eq{C_sim2},
	and $R$ given by the exact integral in \eq{entr_int}. 
	Then, for each $\gamma>1$ there is a number $\hat\lambda=\hat\lambda(\gamma,n)>1$ 
	such that the following holds. 
	\begin{enumerate}
		\item[(a)] For each similarity exponent $\lambda\in(1,\hat\lambda)$ 
		there are infinitely many continuous, radial self-similar solutions $(\rho,u,E)(t,r)$
		defined on $\RR_t^-\times\RR_r^+$ which suffer amplitude blowup in primary 
		flow-variables at the origin as $t\uparrow0$. 
		\item[(b)] Furthermore, each solution satisfying Condition $(\Pi)$ (see Section 
		\ref{t>0}) can be continued 
		to all times $t>0$ and contains an admissible, expanding shock wave emanating from the origin. 
		The resulting triples $(\rho(t,{\bf x}),{\bf u}(t,{\bf x}),E(t,{\bf x}))$ defined by \eq{1_to_multi-d}
		are globally defined and provide admissible weak solutions of the full multi-d Euler system 
		\eq{mass_m_d_full_eul}-\eq{energy_m_d_full_eul} according to Definition \ref{weak_soln}.
		Finally, the pressure field in these solutions is everywhere strictly positive at all times.
	\end{enumerate}
\end{main_result}
The rest of the paper is organized as follows.  
In Section \ref{setup} we follow \cites{cf,laz} in setting up the
framework for radial self-similar solutions to \eq{mass_m_d_full_eul}-\eq{energy_m_d_full_eul}.
This includes a discussion of the similarity ODEs \eq{V_sim2}-\eq{C_sim2} and the particular solutions of 
these that we make use of. The analysis in Section \ref{lam_kap_constrs} shows 
how the physical requirements (1)-(2) above single out a unique value of the similarity parameter 
$\kappa$, which turns out  to characterize isentropic flow. This allows us to obtain 
the relevant self-similar solutions as in the isentropic case, the details of which were 
given in \cite{jt3}; the argument is outlined in Section \ref{constr}. Finally, in Section 
\ref{weak_solns} we establish that the resulting flows are genuine global weak solution of
the full multi-d Euler system \eq{mass_m_d_full_eul}-\eq{energy_m_d_full_eul}.

\section{Similarity variables and similarity ODEs}\label{setup}
We start by prescribing the form of similarity solutions to be 
used in the rest of the paper; see \cites{gud,cf,sed,stan,rj}. By invariance under time translation
we are free to choose $t=0$ to be the time of collapse, and this choice is 
built into the choice of the similarity variable $x$. 
Following \cites{laz,cf} we use the variables
\beq\label{alt_sim_vars}
	x=\frac{t}{r^\lambda},\qquad \rho(t,r)=r^\kappa R(x),\qquad 
	u(t,r)=-\frac{r^{1-\lambda}}{\lambda}\frac{V(x)}{x}, \qquad 
	c(t,r)=-\frac{r^{1-\lambda}}{\lambda}\frac{C(x)}{x}.
\eeq
At this stage the parameters $\lambda$ and $\kappa$ are free. 
However, we are interested in solutions suffering blowup at $t=0$, which 
corresponds to $x=0$. The similarity solutions to be built will be such that 
$\frac{V(x)}{x}$ and $\frac{C(x)}{x}$ approach finite limits as $x\to 0$, so that,
e.g., $u(0,r)\propto r^{1-\lambda}$ according to \eq{alt_sim_vars}. 
Consequently, we shall focus exclusively on similarity exponents satisfying 
$\lambda>1$. Further constraints are imposed below.

\begin{remark}[The parameter $\kappa$]
	For standard Guderley solutions (see Section \ref{intro}) the incoming shock penetrates a quiescent 
	region near $r=0$. The constancy of the density field $\rho(t,r)=r^\kappa R(x)$ 
	there implies that $\kappa$ must vanish. In contrast, for 
	the solutions we consider in the present work the density will change dynamically
	in all of space-time, and $\kappa$ is therefore not determined a priori.
	However, we shall show that the requirements $\mathrm{(1)}$-$\mathrm{(2)}$ in Section \ref{intro}
	in fact determine $\kappa$ uniquely in terms of $\lambda$ and $\gamma$; 
	see \eq{isentr_kappa} and Remark \ref{isentropic_comparison}. 
\end{remark}
Substitution of \eq{alt_sim_vars} into \eq{m_eul}-\eq{mom_eul} yield three coupled, 
non-linear ODEs, the {\em similarity ODEs}, for $R(x)$, $V(x)$, $C(x)$.
It is a remarkable fact (first observed by Guderley in the case $\kappa=0$, according to \cite{cf}) 
that $R$ can be eliminated to give two coupled ODEs for only $V$ and $C$, viz.
\begin{align}
	\frac{dV}{dx}&=-\frac{1}{\lambda x}\frac{G(V,C)}{D(V,C)}\label{V_sim2}\\
	\frac{dC}{dx}&=-\frac{1}{\lambda x}\frac{F(V,C)}{D(V,C)},\label{C_sim2}
\end{align}
which in turn yield a single, autonomous ODE 
(no explicit $x$-dependence) 
\beq\label{CV_ode}
	\qquad\qquad\qquad\frac{dC}{dV}=\frac{F(V,C)}{G(V,C)} \qquad\qquad\text{(reduced similarity ODE)}
\eeq
relating $V$ and $C$ along similarity solutions. A direct calculation shows that 
the functions $D$, $F$, $G$ are given by
\begin{align}
	D(V,C)&=(1+V)^2-C^2\label{D}\\
	G(V,C)&=nC^2(V-V_*)-V(1+V)(\lam+V)\label{G}\\
	F(V,C)&=C\left\{C^2\big(1+\textstyle\frac{\alpha}{1+V}\big)
	-k_1(1+V)^2+k_2(1+V)-k_3\right\},\label{F}
\end{align}
where
\beq\label{V_*}
	V_*=\textstyle\frac{\kappa-2(\lambda-1)}{n\gamma},
\eeq
\beq\label{alpha}
	\alpha=\textstyle\frac{1}{\gamma}[(\lambda-1)+\frac{\kappa}{2}(\gamma-1)],
\eeq
and
\beq\label{ks}
	k_1=1+{\textstyle\frac{(n-1)(\gamma-1)}{2}},\qquad 
	k_2={\textstyle\frac{(n-1)(\gamma-1)+(\gamma-3)(\lam-1)}{2}},\qquad
	k_3=\textstyle\frac{(\gamma-1)(\lam-1)}{2}.
\eeq
Note that $V=V_*$ is a vertical asymptote for the zero-level set of $G$ in the $(V,C)$-plane.
For later reference we introduce the null-clines
\[\mathcal F:=\{(V,C)\,|\,F(V,C)=0\}\qquad\mathcal G:=\{(V,C)\,|\,G(V,C)=0\},\]
and the ``critical lines''
\[L_\pm:=\{(V,C)\,|\,C=\pm(1+V)\}.\]
We note the symmetries
\beq\label{symms}
	G(V,-C)=G(V,C),\qquad F(V,-C)=-F(V,C).
\eeq
The autonomous ODE \eq{CV_ode} plays a key role in the analysis:
Certain of its trajectories will provide the various parts of the flows we seek.
This requires a detailed analysis of its phase portrait in the $(V,C)$-plane,
including the dependencies on the parameters $n$, $\gamma$, 
$\lambda$, and $\kappa$. We shall refer to \cites{jt3,laz} for the relevant 
facts. 

To fully describe the physical flow we also need to recover the density 
field $\rho(t,r)=r^\kappa R(x)$. This is obtained from $V(x)$ and $C(x)$ via 
an exact integral, the 
so-called ``entropy integral'' \cites{rj,laz}, for the similarity ODEs:
\beq\label{entr_int}
	\textstyle\left(\frac{C}{x}\right)^2\!R^{1-\gamma}[R|1+V|]^q\equiv S\qquad\text{($S$ constant $>0$)},
\eeq  
with $R\geq 0$ and  
\beq\label{q}
	q=\textstyle\frac{1}{\kappa+n}[\kappa(\gamma-1)+2(\lambda-1)]
	\equiv \frac{2\gamma}{\kappa+n}\alpha.
\eeq 
The existence of this exact integral amounts to the fact that the specific entropy remains constant 
along particle paths in smooth Euler flows. The constant on the right-hand side 
of \eq{entr_int} takes different values in each region of smoothness. 

It turns out that \eq{CV_ode} has up to 11 critical points (including two at infinity); 
however, not all of these are relevant for building
physically meaningful radial Euler flows that suffer amplitude blowup. (We follow the 
labeling in \cite{laz}.)
Among the ones we will make use of are 
\beq\label{p+-}
	P_{\pm\infty}=(V_*,\pm\infty), \qquad\text{where $V_*$ is given in \eq{V_*}.}
\eeq
In particular, the relevant solutions of \eq{CV_ode} will approach the critical points
$P_{\pm\infty}$ as $x\to\mp\infty$, respectively. 
To analyze these we focus on $P_{+\infty}$ (sufficient according to \eq{symms})
and change to the variables $W=V-V_*$ and $Z=C^{-2}$, so that $P_{+\infty}$ 
corresponds to the origin in the $(W,Z)$-plane. Linearizing the 
resulting equation for $dZ/dW$ about $(W,Z)=(0,0)$ gives the ODE
\beq\label{WZ_ode}
	\frac{dZ}{dW}=-\frac{aZ}{nW-bZ},
\eeq
where 
\beq\label{ab}
	a=2(1+\textstyle\frac{\alpha}{1+V_*}),\qquad b=V_*(1+V_*)(\lambda+V_*).
\eeq
We shall want $P_{+\infty}$ to be a saddle point, which imposes the constraint $a>0$; this requirement 
is addressed below in Section \ref{lam_kap_constrs}.

The Euler solutions we construct are built from four trajectories $\Gamma_1-\Gamma_4$
of \eq{CV_ode} (see Figure 1):
\begin{itemize}
	\item $\Gamma_1$ connects $P_{+\infty}$ to a critical point $P_8$ in the 2nd quadrant 
	of the $(V,C)$-plane. Passing through the 
	point $P_8$ corresponds to crossing the ``critical characteristic'' or ``sonic curve'' in the $(r,t)$-plane, i.e., the 
	1-characteristic which reaches the center of motion at time of collapse.
	\item $\Gamma_2$ connects $P_8$ to the origin $P_1=(0,0)$, which turns out to be a 
	star point for \eq{CV_ode}. There is an infinite number of such trajectories and we are 
	free to let $\Gamma_2$ be any one of these, the only constraint 
	being that it reaches the origin with a strictly negative and finite slope.
	\item $\Gamma_3$ is the trajectory that moves into the 4th quadrant with the same slope that
	$\Gamma_2$ arrived with at the origin; it is continued until a certain point (to be determined),
	at which it jumps to the corresponding ``Hugoniot point" $P_H$ (located in the 3rd quadrant in Figure 1). 
	This jump corresponds to the outgoing shock generated at the center of motion in physical space
	at time $t=0$.
	\item $\Gamma_4$ connects $P_H$ to $P_{-\infty}$; due to the symmetries \eq{symms},
	the trajectory $\Gamma_4$ coincides with a part of the reflection of $\Gamma_1$ about
	the $V$-axis.
\end{itemize}

The challenge is to verify that for fixed values of $n=2$ or $3$ and $\gamma>1$, there are 
values $\lambda>1$ and $\kappa$ for which such trajectories exist, and that the resulting
Euler solutions are genuine weak solutions. We establish the first part by showing that the 
requirements (1) and (2) in Section \ref{intro} reduce the analysis to the isentropic setting
treated in \cite{jt3} (Sections \ref{lam_kap_constrs}-\ref{constr} below). The second part is 
addressed in Section \ref{weak_solns}.

\section{Restrictions on $\lambda$ and $\kappa$}\label{lambda_restrict}\label{lam_kap_constrs}
We assume throughout that $\lambda>1$ and $\gamma>1$.
We shall construct solutions $(V(x),C(x))$ of \eq{V_sim2}-\eq{C_sim2}
 with the property that 
\beq\label{condition1}
	\frac{V(x)}{x}\to \nu\qquad\text{and}\qquad \frac{C(x)}{x}\to \mu\qquad\text{as $x\to0$,}
\eeq
where $\nu>0$ and $\mu<0$ are finite numbers. In particular, this will ensure 
that the trajectory denoted $\Gamma_2$ above reaches the origin in the 
$(V,C)$-plane with the finite, negative slope $\frac{\mu}{\nu}$. It also implies,
via \eq{alt_sim_vars}, that the flow variables at time of collapse are given by
\beq\label{at_collapse}
	\rho(0,r)=R(0)r^\kappa\qquad 
	u(0,r)=-\frac{\nu}{\lambda}r^{1-\lambda}, \qquad 
	c(0,r)=-\frac{\mu}{\lambda}r^{1-\lambda}.
\eeq
As $\lambda>1$, it follows that the particle and sound speeds both 
blow up at the center of motion at time collapse in the solutions 
we consider; we shall see that the same applies to the density.

Next, we insist that the flows under consideration are physical in the 
sense that they contain bounded amounts of mass, momentum, 
and energy within a fixed ball about the origin:
\[\int_0^1\rho(t,r)r^m\, dr,\quad
\int_0^1\rho(t,r)|u(t,r)|r^m\, dr,\quad
\int_0^1\rho(t,r)\left(e(t,r)+\textstyle\frac{1}{2}|u(t,r)|^2\right)r^m\, dr<\infty,
\]
By using \eq{at_collapse} it is straightforward to verify that,
at time $t=0$, these integrability constraints amount to the following conditions:
\begin{itemize}
	\item[(I)] $\kappa+n>0$
	\item[(II)] $\lambda<1+\kappa+n$
	\item[(III)] $\lambda<1+\textstyle\frac{\kappa+n}{2}$,
\end{itemize}
respectively. Note that (II) is a consequence of (I) and (III).
We record a consequence of these conditions:
According to (III) and the standing assumption $\gamma>1$, we have
\beq\label{1+V*}
	0<\textstyle\frac{n+\kappa-2(\lambda-1)}{n\gamma}
	<\textstyle\frac{n\gamma+\kappa-2(\lambda-1)}{n\gamma}\equiv 1+V_*,
\eeq
where $V_*$ is defined in \eq{V_*}.
Next, as the main goal is to establish existence of radial similarity 
solutions of the Euler system that suffer amplitude blowup at time $t=0$
(in the presence of an everywhere positive pressure field), we also insist that
the primary flow variables $\rho$, $u$, and $c$ (or, equivalently by \eq{sound_speed}, $\theta$) 
remain bounded at any 
fixed time $\bar t<0$ prior to collapse. In particular, these quantities should
remain bounded as $r\downarrow 0$. First, consider 
\beq\label{u_t_bar}
	u(\bar t,r)=-\frac{r^{1-\lambda}}{\lambda}\frac{V(x)}{x}=-\frac{1}{\lambda \bar t}V(x)r
	\propto V(x)r. 
\eeq
Recalling that our solutions will be constructed so that $(V(x),C(x))$ 
tends to $P_{+\infty}=(V_*,+\infty)$ as $x\downarrow -\infty$,
we obtain from \eq{u_t_bar} that $u(\bar t,r)\sim r$ as $r\downarrow 0$. This shows that,
at any time prior to collapse, the speed of the fluid 
particles approaches zero at a linear rate as the center of motion is approached.
Thus, no additional constraint is imposed by requiring boundedness (indeed, vanishing) 
of the fluid speed near the center of motion. 

Next, to analyze $c(\bar t,r)$ as $r\downarrow 0$, we need the leading order behavior 
of $C(x)$ as $x\downarrow -\infty$. Recalling that $(V(x),C(x))\to(V_*,+\infty)$ as $x\downarrow-\infty$,
we obtain from \eq{C_sim2} that
\[\textstyle\frac{1}{C}\frac{d C}{dx}\sim \frac{1}{\lambda}(1+\frac{\alpha}{1+V_*}) \frac{1}{x} 
\qquad\text{as $x\downarrow -\infty$.}\]
It follows that 
\beq\label{sigma}
	C(x)\sim |x|^\sigma\qquad\text{as $x\downarrow -\infty$, where} \qquad
	\sigma=\textstyle\frac{1}{\lambda}(1+\frac{\alpha}{1+V_*}).
\eeq
For $\bar t<0$ fixed, we have $x\propto- r^{-\lambda}$, so that \eq{alt_sim_vars}${}_4$ gives
\beq\label{c_bar}
	c(\bar t,r)\sim r^{1-\sigma\lambda}\qquad\text{as $r\downarrow 0$.}
\eeq
Therefore, to have $c(\bar t,r)$ bounded near the center of motion prior to 
collapse, we must require $1-\sigma\lambda\geq0$, which, by
\eq{sigma} and \eq{1+V*}, amounts to 
\beq\label{alfa}
	\alpha\leq 0.
\eeq
We also note that (I) yields 
$n(\gamma-1)-2(\lambda-1)>-\kappa(\gamma-1)-2(\lambda-1)\equiv -2\gamma\alpha$, so that 
\eq{alfa} gives
\beq\label{cond_1}
	[n(\gamma-1)-2(\lambda-1)]\alpha\leq -2\gamma\alpha^2.
\eeq
To obtain the behavior of  $\rho(\bar t,r)$ as $r\downarrow 0$, we use the exact integral
\eq{entr_int}, together with \eq{1+V*}, $V(x)\sim V_*$, $C(x)\sim |x|^\sigma$, and $x\propto r^{-\lambda}$, 
to get that
\beq\label{rho_bar}
	\rho(\bar t,r)\sim r^{\kappa+\frac{2\lambda(\sigma-1)}{1-\gamma+q}}
	\qquad\text{as $r\downarrow 0$,}
\eeq
where $q$ is given by \eq{q}. To have $\rho(\bar t,r)$ bounded as $r\downarrow0$, we must 
therefore have 
\beq\label{q_etc}
	\kappa+\textstyle\frac{2\lambda(\sigma-1)}{1-\gamma+q}\geq 0.
\eeq
Observe that, by \eq{q}, (I), and \eq{alfa}, we have $q\leq 0$. Therefore,
the denominator in \eq{q_etc} satisfies $1-\gamma+q\leq 1-\gamma<0$,
so that \eq{q_etc} amounts to
\[\kappa(1-\gamma+q)+2\lambda(\sigma-1)\leq 0.\]
Using \eq{q} (first equality) and \eq{sigma} to substitute for $q$ and $\sigma$, respectively, and 
rearranging, we obtain the equivalent condition 
\beq\label{almost_there}
	\textstyle\frac{\alpha}{1+V_*}\leq 
	\frac{n}{\kappa+n}[(\lambda-1)+\frac{\kappa}{2}(\gamma-1)]
	\equiv\frac{n\gamma \alpha}{\kappa+n}.
\eeq
Again by (I) and \eq{1+V*}, the last condition is equivalent to
\[[n\gamma(1+V_*)-(\kappa +n)]\alpha\geq 0,\]
or, substituting for $V_*$ from \eq{V_*},
\beq\label{cond_2}
	[n(\gamma-1)-2(\lambda-1)]\alpha\geq0.
\eeq
Combining \eq{cond_1} and \eq{cond_2} yields $0\leq-2\gamma\alpha^2$. As $\gamma>1$ by
assumption, we conclude that $\alpha$ must vanish: $\alpha=0$. This condition
uniquely determines $\kappa$ in terms of $\gamma$ and $\lambda$; see \eq{isentr_kappa}.
Finally, it is immediate to verify from \eq{c_bar} and \eq{rho_bar} that this ``isentropic'' value for $\kappa$ 
(see Remark \ref{isentropic_comparison} below)
indeed yields bounded values for $\rho(\bar t,r)$ and $c(\bar t,r)$ as $r\downarrow0$.
Summing up, we have established the following. 

\begin{proposition}\label{kappa_value}
	Let $\gamma>1$, $\lambda>1$ and assume $(V(x),C(x))$ is a solution 
	of the similarity ODEs \eq{V_sim2}-\eq{C_sim2} satisfying \eq{condition1} 
	(with $\mu$ and $\nu$ finite and nonzero) and approaching $P_{+\infty}$ 
	as $x\downarrow -\infty$. Also, let $R\geq 0$ be given by \eq{entr_int} for 
	a constant $S>0$. 
	
	Then the conditions {\em (I)-(III)} of local integrability of conserved quantities, 
	together with pointwise boundedness as $r\downarrow0$ of the primary 
	flow variables $\rho$ and $c$ at times $t<0$, imply that the parameter
	$\kappa$ must take the value
	\beq\label{isentr_kappa}
		\kappa=-\textstyle\frac{2(\lambda-1)}{\gamma-1}.
	\eeq
\end{proposition}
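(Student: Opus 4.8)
The plan is to reduce the entire statement to the single claim that the parameter $\alpha$ defined in \eqref{alpha} must vanish. Indeed, setting $\alpha=0$ in \eqref{alpha} and solving for $\kappa$ immediately returns the asserted value \eqref{isentr_kappa}, so the proposition follows once $\alpha=0$ is established. My strategy is therefore to extract, from the two boundedness requirements on $c$ and $\rho$ as $r\downarrow0$, two inequalities involving $\alpha$ whose combination squeezes $\alpha$ to zero.

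First I would analyze the sound speed. Using the asymptotics $C(x)\sim|x|^\sigma$ near $P_{+\infty}$, obtained by linearizing \eqref{C_sim2}, the similarity form \eqref{alt_sim_vars} gives $c(\bar t,r)\sim r^{1-\sigma\lambda}$ as in \eqref{c_bar}. Boundedness of the sound speed (equivalently, the temperature) at the center of motion prior to collapse forces the exponent to be nonnegative, $1-\sigma\lambda\geq0$; since $1+V_*>0$ by \eqref{1+V*}, the explicit expression for $\sigma$ in \eqref{sigma} converts this into the clean condition $\alpha\leq0$. Combining this with the mass-integrability condition (I), which bounds $n(\gamma-1)-2(\lambda-1)$ from below in terms of $\alpha$, yields the first key inequality $[n(\gamma-1)-2(\lambda-1)]\alpha\leq-2\gamma\alpha^2$.

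Next I would treat the density. Feeding $V(x)\sim V_*$, $C(x)\sim|x|^\sigma$, and $x\propto r^{-\lambda}$ into the entropy integral \eqref{entr_int} produces the power law $\rho(\bar t,r)\sim r^{\kappa+2\lambda(\sigma-1)/(1-\gamma+q)}$, with $q$ as in \eqref{q}, and boundedness requires this exponent to be nonnegative. The delicate point here---and the step I expect to be the main obstacle---is the sign bookkeeping: one must first show $q\leq0$ (which follows from (I) together with the already-established $\alpha\leq0$), so that the denominator $1-\gamma+q$ is strictly negative; clearing it therefore reverses the inequality, and after substituting the expressions for $q$ and $\sigma$ one arrives at the second key inequality $[n(\gamma-1)-2(\lambda-1)]\alpha\geq0$, with exactly the same left-hand side as the first.

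Finally, the two inequalities pin $[n(\gamma-1)-2(\lambda-1)]\alpha$ between $0$ and $-2\gamma\alpha^2$, giving $0\leq-2\gamma\alpha^2$. Since $\gamma>1$, this is possible only if $\alpha=0$, and hence $\kappa=-2(\lambda-1)/(\gamma-1)$. As a consistency check I would then verify directly from \eqref{c_bar} and \eqref{rho_bar} that this value makes both exponents nonnegative, confirming that $\alpha=0$ is not only necessary but genuinely compatible with the boundedness requirements.
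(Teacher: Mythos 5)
Your proposal is correct and follows essentially the same route as the paper: boundedness of $c$ yields $\alpha\leq0$ and, via (I), the inequality $[n(\gamma-1)-2(\lambda-1)]\alpha\leq-2\gamma\alpha^2$; boundedness of $\rho$ (with the sign check $q\leq0$ so that $1-\gamma+q<0$) yields $[n(\gamma-1)-2(\lambda-1)]\alpha\geq0$; squeezing gives $\alpha=0$ and hence \eqref{isentr_kappa}. The concluding consistency check matches the paper's final remark as well.
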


\noindent {\bf Assumption:} It is assumed from now on that $\kappa$ takes the value in \eq{isentr_kappa}.

We note that \eq{isentr_kappa} gives $\alpha=0$, so that $a=2>0$ in \eq{ab}, verifying that $P_{+\infty}$ 
is a saddle point for the cases under consideration.
Also, with $\kappa$ given by \eq{isentr_kappa}, the 
integrability conditions (I)-(III) (recall that (II) is a consequence of (I) and (III))
reduce to the single constraint that
\beq\label{1st_lam_constr}
	\lambda<\bar\lambda(\gamma,n):=1+\textstyle\frac{n}{2}(1-\frac{1}{\gamma}).
\eeq
As \eq{isentr_kappa} yields a negative value for $\kappa$,
\eq{at_collapse} shows that also the density will blow up at collapse for the solutions 
under consideration.

\begin{remark}[Isentropic vs.\ full Euler self-similar solutions]\label{isentropic_comparison}
	Consider the simplified isentropic Euler model, i.e., 
	\eq{mass_m_d_full_eul}-\eq{mom_m_d_full_eul} with the constitutive relation
	$p\propto \rho^\gamma$. Radial self-similar flows of the form \eq{alt_sim_vars}, 
	suffering amplitude blowup in the presence of an everywhere positive pressure field, 
	have been constructed in \cites{jt3,mrrs1,mrrs2,b_c-l_g_s}. For the isentropic case the sound speed and density 
	are related according to $c^2\propto\rho^\gamma$, and it follows from \eq{alt_sim_vars}  
	that $\kappa$ takes the value \eq{isentr_kappa}. We therefore refer to 
	this as the {\em isentropic} $\kappa$-value.
		
	Conversely, for the full Euler system considered in the present work, it is immediate 
	to verify that with $\kappa$ given by \eq{isentr_kappa}, the relation \eq{entr_int} reduces to
	\beq\label{CR}
		(\textstyle\frac{C}{x})^2R^{1-\gamma}\equiv S \quad\text{(constant).}
	\eeq
	In terms of temperature $\theta\propto c^2$ and density $\rho$, this amounts to
	$\theta\rho^{1-\gamma}$ being constant, and it follows that the specific entropy 
	takes on a constant value within any region of continuity. Thus, the solutions 
	we build in this work describe
	isentropic flow away from the expanding shock wave generated at collapse. 
	This has the consequence that a substantial part of the analysis in 
	\cites{jt3,laz} carries over verbatim to the present case.
	
	However, we stress that the solutions we construct in this work 
	contain an expanding shock wave which propagates according to 
	the Rankine-Hugoniot conditions for the {\em full} Euler system. 
	In particular, the entropy suffers a jump across it, and the solution is 
	not globally isentropic.
	
	Examples of the various solution trajectories used in the construction are displayed 
	in Figure 1. Unsurprisingly, this is similar to the corresponding Figure 1 in 
	 \cite{jt3}. However, these figures are not identical: the Hugoniot curves 
	 $\Gamma_H$ are distinct in the two cases.
\end{remark}

For later reference we record the corresponding expression for the pressure field
when $\kappa$ takes the isentropic value in \eq{isentr_kappa}.
According to \eq{sound_speed}, \eq{alt_sim_vars},
and \eq{CR}, the pressure is then given by
\beq\label{pressure2}
	p(t,r)=\textstyle\frac{1}{\gamma}\rho(t,r) c^2(t,r)
	=\frac{S^\frac{1}{1-\gamma}}{\gamma\lambda^2}
	\left(r^{1-\lambda}\big|\frac{C(x)}{x}\big|\right)^\frac{2\gamma}{\gamma-1}.
\eeq

Next we record the Rankine-Hugoniot relations and entropy conditions
for a shock wave connecting two radial similarity flows of the form
\eq{alt_sim_vars}; see \cite{laz}. It is assumed that the parameters $\kappa$ and $\lambda$
are the same for both flows, and that the shock follows a path 
along which $x=\frac{t}{r^\lambda}\equiv x_\s$ (constant). 
With subscripts $0$ and $1$ referring to states immediately prior to and 
after passing through the shock, respectively, the Rankine-Hugoniot relations take the form:
\begin{align}
	1+V_1&=\textstyle\frac{\gamma-1}{\gamma+1}(1+V_0)+\frac{2C_0^2}{(\gamma+1)(1+V_0)}
	\label{rh1}\\
	C_1^2&=C_0^2+\textstyle\frac{\gamma-1}{2}[(1+V_0)^2-(1+V_1)^2]
	\label{rh2}\\
	R_1(1+V_1)&=R_0(1+V_0).
	\label{rh3}
\end{align}
(Note that neither $\lambda$, $\kappa$, nor $x_\s$ appears explicitly.)
We shall also need the entropy conditions for a shock wave of the third characteristic 
family: 3-characteristics should impinge on the shock forward in time \cite{daf}. 
Assuming that the shock propagates for $t>0$ (so that $x>0$), this implies that 
\beq\label{lax}
	0\geq C_0>-(1+V_0)\qquad\text{and}\qquad 0>-(1+V_1)>C_1,
\eeq
where we have also used $c\geq0$ and conservation of mass across the shock.
Thus, the state $(V_0,C_0)$ ($(V_1,C_1)$, repectively) corresponding 
to the flow at the immediate outside (inside, repectively) of a 3-shock propagating for $t>0$, must lie above
(below, repectively) the critical line $L_-=\{(V,-(1+V)\}$ in the $(V,C)$-plane. Our only use 
of these relations is for the expanding shock wave generated at the origin at time of 
collapse; see Figure 1 in which
the point $(V_0,C_0)$ is the lower end point of $\Gamma_3$ and $(V_1,C_1)$ (denoted $P_H$) 
is the upper end point of $\Gamma_4$ in the lower half of the $(V,C)$-plane.

\section{Construction of radial similarity Euler flows}\label{constr}

\subsection{The flow prior to collapse}\label{t<0}
As detailed in Remark \ref{isentropic_comparison}, the assumption that $\kappa$ 
takes the isentropic value \eq{isentr_kappa} renders the incoming flow isentropic. 
This circumstance allows us to use the analysis from \cites{jt3,laz} to build the 
flow prior to collapse, which amounts to determining 
the two trajectories $\Gamma_1$ and $\Gamma_2$ for \eq{CV_ode} in Figure 1.
More precisely, the value \eq{isentr_kappa} gives the similarity ODEs \eq{V_sim2}-\eq{C_sim2}
with $D(V,C)$ given by \eq{D} and
\begin{align}
	G(V,C)&=C^2(nV-\kappa)-V(1+V)(\lam+V)\label{G1}\\
	F(V,C)&=C\left\{C^2
	-k_1(1+V)^2+k_2(1+V)-k_3\right\}.\label{F1}
\end{align}
These yield precisely the similarity ODEs for the isentropic Euler model (cf.\ Eqns.\
(1.11)-(1.15) in \cite{jt3}). The analysis in \cites{jt3,laz} established 
that for $n=2$ or $3$ and $\gamma>1$, there is a number 
$\hat\lambda=\hat\lambda(\gamma,n)\in(1,\bar\lambda(\gamma,n))$ 
(with $\bar\lambda(\gamma,n)$ given in \eq{1st_lam_constr})
such that the following holds. For each $\lambda\in(1,\hat\lambda)$ there is a unique 
trajectory $\Gamma_1$ of \eq{CV_ode}  connecting $P_{+\infty}$ to a critical point $P_8\in \mathcal F\cap
\mathcal G\cap L_+$, which is a node for the parameter values under consideration. 
There are then  infinitely many trajectories $\Gamma_2$ of \eq{CV_ode} connecting 
$P_8$ to the star point (degenerate node) at $P_1=(0,0)$, and reaching the latter point with 
a finite, negative slope.  We fix any one of these 
as trajectory $\Gamma_2$. A typical configuration is displayed in Figure 1. 

We are free to fix an $x$-parametrization of the combined trajectory $\Gamma_1\cup\Gamma_2$
by choosing any $x_0<0$ and setting $(V(x_0),C(x_0))=P_8$.
Furthermore, it may be deduced from \eq{V_sim2}-\eq{C_sim2} that the points 
$P_1$ and $P_{+\infty}$ are approached as $x\uparrow0$ and $x\downarrow-\infty$, respectively. 
With this we obtain a continuous
solution $(V(x),C(x))$ of \eq{V_sim2}-\eq{C_sim2} defined for all $x<0$. (See Remark 3.1 in \cite{jt3} for details.) 
Finally, making a choice for the constant $S>0$ on the right-hand side of \eq{CR}, this provides, via 
\eq{alt_sim_vars}, a continuous solution $(\rho,u,c)(t,r)$ of the full, radial Euler system 
\eq{m_eul}-\eq{ener_eul}  for times $t<0$. 

In particular, it follows from the analysis in Section \ref{lam_kap_constrs}
that this solution is globally bounded in space at each time $\bar t<0$ prior to collapse, while it 
suffers amplitude blowup at the origin at time $t=0$. Finally, the combined trajectory 
$\Gamma_1\cup\Gamma_2$ of \eq{V_sim2}-\eq{C_sim2}
is contained within the half-strip $\{(V,C)\,|\, -1<V<0,\, C>0\}$, 
and satisfies $\frac{C(x)}{x}\to \mu\in(-\infty,0)$ as $x\uparrow 0$. Using this in \eq{pressure2}
shows that the resulting pressure field $p(t,r)$ is everywhere non-vanishing for $t<0$.


\begin{figure}
	\centering
	\includegraphics[width=13cm,height=15cm]{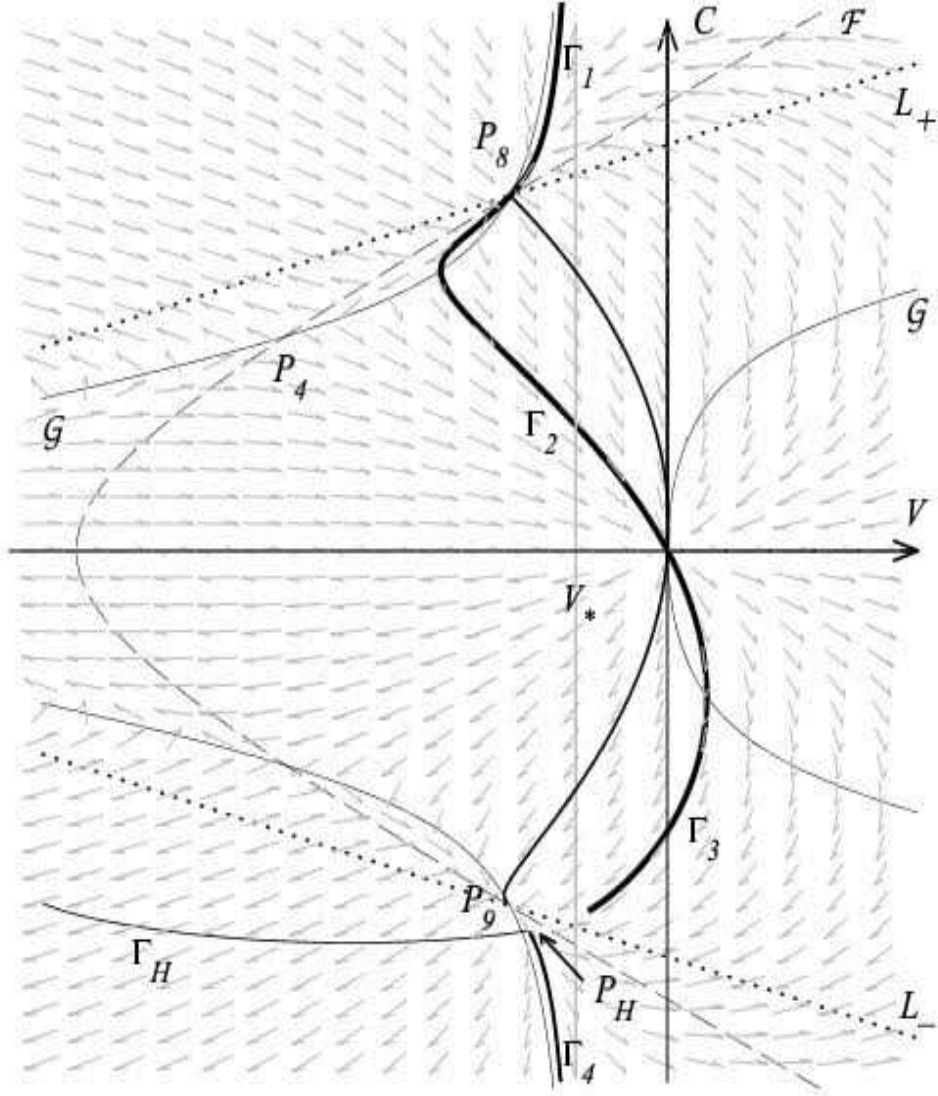}
	\caption{Maple plot with examples of solutions of \eq{CV_ode} with 
	$n=3$, $\gamma=3$, $\lambda=1.22$. The dotted curves are the critical lines $L_\pm$; the 
	zero levels $\mathcal F$ and $\mathcal G$ of $F$ and $G$ are the solid and dashed grey
	curves, respectively; the gray vertical line is $V=V_*$ (the asymptote of $\mathcal G$).
 	The thick solid black curve is the solution consisting of the four parts $\Gamma_1$-$\Gamma_4$;
	this is the type of solution constructed in Section \ref{constr}: Condition $(\Pi)$ is satisfied
	and the solution includes a jump from $\Gamma_3$ to $\Gamma_4$. The Hugoniot curve 
	$\Gamma_H$ corresponding to $\Gamma_3$ is the ``smiley'' curve in the third 
	quadrant which intersects $\Gamma_4$ at $P_H$.
	In addition, the thin solid black trajectory joining $P_8$ to $P_9:=(V_8,-C_8)$ illustrates
	how, for the same parameters $(n,\gamma,\lambda)$, Condition $(\Pi)$ may fail for 
	suitable choices of $\Gamma_2$. In such cases a complete, continuous solution of 
	\eq{CV_ode} is obtained by combining the latter trajectory with $\Gamma_1$, as well as with 
	the reflection of $\Gamma_1$ about the $V$-axis (i.e., the unique trajectory 
	joining $P_9$ to $P_{-\infty}$).}\label{Figure_2}
\end{figure} 

\subsection{The flow after collapse}\label{t>0}
We next continue the solution $(V(x),C(x))$ from Section \ref{t<0} through the origin
for $x>0$, entering the 4th quadrant of the $(V,C)$-plane with the same slope $\frac{\mu}{\nu}$
with which $\Gamma_2$ reached the origin (see Figure 1). The resulting trajectory of \eq{CV_ode} 
corresponding to $0<x<x_\s$ is denoted by $\Gamma_3$. Here, $x_\s$ is to be determined so that 
it gives the path (radius) $r_\s(t)=(\frac{t}{x_\s})^\frac{1}{\lambda}$ in physical space of an 
admissible 3-shock. The shock is required to connect $\Gamma_3$ to the unique trajectory $\Gamma_4$ of \eq{CV_ode}
which approaches the critical point $P_{-\infty}=(V_*,-\infty)$. Let $C=C_*(V)$ denote 
the unique solution of \eq{CV_ode} approaching $P_{+\infty}=(V_*,+\infty)$ (i.e., $C_*(V)$
parametrizes the trajectory $\Gamma_1$ determined above). 
It follows from the symmetries \eq{symms} that $\Gamma_4$ is given by $C=-C_*(V)$.

To determine $x_\s$ we argue as follows.
The trajectory $\Gamma_3$ starts out from the origin in the $(V,C)$-plane;
by continuity it follows that there is an $x'>0$ so that 
$(V(x),C(x))$ is located strictly above $L_-$ for $x\in(0,x')$. For each such $x$ we apply 
the Rankine-Hugoniot conditions \eq{rh1}-\eq{rh3} with $(V_0,C_0)=(V(x),C(x))$ to calculate 
the point $(V_1,C_1)=: (V_H(x),C_H(x))$ with the property that the latter 
point is the inside state of an admissible 3-shock with outside state $(V(x),C(x))$ (see below). 
We denote the curve $x\mapsto (V_H(x),C_H(x))$ by $\Gamma_H$, and refer to 
it as the Hugoniot curve corresponding to $\Gamma_3$ (see Figure 1). 

It is immediate from \eq{rh1}-\eq{rh2} that the starting point $(V_H(0),C_H(0))$ 
of $\Gamma_H$ is given by
\[(V_H(0),C_H(0))=\big(-\textstyle\frac{2}{\gamma+1},-\frac{\sqrt{2\gamma(\gamma-1)}}{\gamma+1}\big),\]
which is located in the 3rd quadrant and strictly below $L_-$. (Note that, while \eq{rh1}-\eq{rh2} only determines
$C_H(0)$ up to a sign, we must have $C_H(0)<0$ in order to satisfy the entropy condition \eq{lax}.)
Next, we claim that, as long as $(V_0,C_0)=(V(x),C(x))$ remains strictly above $L_-$, the 
point $(V_H(x),C_H(x))$ remains strictly below $L_-$. Indeed, $x\mapsto (V_H(x),C_H(x))$ is a 
continuous curve which, as just shown, starts out strictly below $L_-$. If, for contradiction, there were an 
$x$-value for which $(V_1,C_1)=(V_H(x),C_H(x))\in L_-$, then $C_1=-(1+V_1)$ and \eq{rh2} reduces to
\[\textstyle\frac{\gamma+1}{2}(1+V_1)^2=\textstyle\frac{\gamma-1}{2}(1+V_0)^2+C_0^2.\]
Dividing through by $(1+V_0)$ and using \eq{rh1} shows that $V_0=V_1$ in this case.
In turn, \eq{rh2} together with the requirement (by \eq{lax}) that  $\sgn C_0=\sgn C_1$ then yield 
$C_0=C_1$. Thus, if $\Gamma_H$ meets $L_-$,
then $\Gamma_3$ meets $L_-$ at the same point. 
Consequently, in using \eq{rh1}-\eq{rh2} to calculate $(V_1,C_1)=(V_H(x),C_H(x))$ (with $C_H(x)<0$)
from $(V_0,C_0)=(V(x),C(x))\in\Gamma_3$, the entropy conditions in \eq{lax} are met as long as 
$\Gamma_3$ remains above $L_-$.

The defining property of $x_\s$ is that the solution of \eq{CV_ode} starting from the point
$(V_H(x_\s),C_H(x_\s))$ is part of the trajectory $\Gamma_4$, the latter being the unique trajectory of 
\eq{CV_ode} which approaches the saddle point $P_{-\infty}$ as $x\uparrow+\infty$. That is, $x_\s$ is defined
by the requirement that 
\beq\label{x_s}
	-C_*(V_H(x_\s))=C_H(x_\s).
\eeq
We have found it necessary to assume that this equation has a solution corresponding to a shock wave:
\begin{itemize}
	\item Condition $(\Pi)$: The trajectory $\Gamma_4$ intersects the Hugoniot curve
	$\Gamma_H$ corresponding to $\Gamma_3$ at a point $P_H=(V_H(x_\s),C_H(x_\s))$ strictly
	below the critical line $L_-$.
\end{itemize}
In particular this implies that the discontinuity $(V(x_\s-),C(x_\s-))\mapsto P_H$
defines an admissible 3-shock.
(If $\Gamma_4$ were to intersect $\Gamma_H$ in more than one point  
below $L_-$, then any one of these will work for our purposes.)
\begin{remark}\label{Pi2}
	Numerical calculations provide robust evidence for the following (see Figure 1).
	For fixed  choices of $n=2$ or $3$ and $\gamma>1$, there are infinitely many 
	values of $\lambda\in(1,\hat\lambda(\gamma,n))$ for which:
	\begin{itemize}
		\item[(a)] there are infinitely many choices of $\Gamma_2$ (the trajectory
		joining $P_8$ to the origin) which generate 
		solutions satisfying Condition $(\Pi)$; and, at the same time:
		\item[(b)] there are infinitely many choices of $\Gamma_2$ which generate 
		solutions violating Condition $(\Pi)$ by having $\Gamma_4$ intersect $\Gamma_H$
		at $P_9:=(V_8,-C_8)$.
	\end{itemize}
	In the present work we assume that Condition $(\Pi)$ is satisfied. I.e., we 
	focus on scenario (a) which is exemplified by the trajectory 
	$\Gamma_1\cup\Gamma_2\cup\Gamma_3\cup\Gamma_4$ 
	(thick solid black curve) in Figure 1. The corresponding Euler flow contains
	an expanding shock wave emanating from the origin.
	
	In contrast, the other unlabeled (thin solid black) trajectory in Figure 1 connecting $P_8$ 
	to $P_9:=(V_8,-C_8)$ illustrates scenario (b). Taken 
	together with the trajectory $\Gamma_1$ and the reflection
	of $\Gamma_1$ about the $V$-axis (i.e., by \eq{symms}, 
	the unique trajectory connecting $P_9$ to $P_{-\infty}$), 
	we obtain a complete, continuous trajectory of \eq{CV_ode} connecting
	$P_{+\infty}$ to $P_{-\infty}$. The resulting Euler flow is continuous except at 
	$(t,r)=(0,0)$, where the flow suffers amplitude blowup. 
\end{remark}
Under the assumption that Condition $(\Pi)$ holds, we obtain $x_\s$ from \eq{x_s}, and 
then solve the similarity ODEs \eq{V_sim2}-\eq{C_sim2} for $x\in(x_\s,+\infty)$, 
with initial condition $(V(x_\s+),C(x_\s+))=P_H$. By construction this 
solution approaches $P_{-\infty}=(V_*,-\infty)$ as $x\uparrow+\infty$.

Next, we calculate $R(x_\s-)>0$ from $C(x_\s-)$ using \eq{CR} with the constant $S$ chosen 
in Section \ref{t<0}. By using the values of $V(x_\s-)$, $C(x_\s-)$, and $R(x_\s-)$ for $V_0$, $C_0$, and $R_0$ in
the Rankine-Hugoniot relations \eq{rh1}-\eq{rh3}, we obtain the values $V_1=V_H(x_\s)$,
$C_1=C_H(x_\s)$ (already determined above), 
as well as $R_1$. (Note that $R_0>0$ together with the entropy conditions in \eq{lax} yield
$R_1>0$.) The latter two values determine, via \eq{CR}, the value 
\[S'=(\textstyle\frac{C_1}{x_\s})^2R_1^{1-\gamma}>0\]
and thus the value of the specific entropy in the region within the expanding shock (see Remark \ref{isentropic_comparison}).
It is immediate to verify that \eq{pressure2} again gives an everywhere strictly 
positive pressure field in the flow after collapse.
This concludes the construction of the relevant solution $(V(x),C(x))$ of \eq{V_sim2}-\eq{C_sim2},
and yields (via \eq{alt_sim_vars} and \eq{entr_int}) the globally defined similarity 
solutions $(\rho(t,r),u(t,r),c(t,r))$ described in Main Result \ref{main_result}.

%
%
%

Finally, we record the following estimates that are consequences of the construction above,
and which are used repeatedly in the analysis below:
\beq\label{VC_props}
	\textstyle\left|\frac{C(x)}{x}\right|,\, \left|\frac{V(x)}{x}\right|\lesssim 1
	\quad\text{for $|x|\leq 1$, and }\quad
	\left|\frac{C(x)}{x}\right|\lesssim |x|^{\frac{1}{\lambda}-1},\, |V(x)|\lesssim 1
	\quad\text{for $|x|\geq 1$.}
\eeq

\section{Radial similarity flows as weak solutions}\label{weak_solns}
Given a radial similarity flow $(\rho(t,r),u(t,r),c(t,r))$ as constructed in Section \ref{constr},
we define (with a slight abuse of notation) the multi-d quantities
\beq\label{assmbld}
	\rho(t,{\bf x}):=\rho(t,|{\bf x}|),\qquad 
	{\bf u}(t,{\bf x}):=u(t,|{\bf x}|)\textstyle\frac{{\bf x}}{|{\bf x}|},\qquad 
	e(t,{\bf x}):=e(t,|{\bf x}|),
\eeq
where (see \eq{sound_speed})
\beq\label{ec}
	e(t,r)=\textstyle\frac{1}{\gamma(\gamma-1)}c^2(t,r).
\eeq
It remains to show that these quantities define a genuine weak solution of the original, 
multi-d Euler system \eq{mass_m_d_full_eul}-\eq{energy_m_d_full_eul}
according to the following definition.
(We write $\rho(t)$ for $\rho(t,\cdot)$ etc., and ${\bf u}=(u_1,\dots,u_n)$.)
\begin{definition}\label{weak_soln}
	Consider the compressible Euler system 
	\eq{mass_m_d_full_eul}-\eq{energy_m_d_full_eul}
	in $n$ space dimensions with pressure function $p=p(\rho,e)$.
	The measurable functions $\rho,\, u_1,\dots,u_n,e:\RR\times \RR^n\to \RR$,
	with $\rho\geq0$ and $e\geq 0$, 
	constitute a {\em weak solution} of \eq{mass_m_d_full_eul}-\eq{energy_m_d_full_eul}
	provided that:
	\begin{itemize}
		\item[(A)] the maps $t\mapsto \rho(t)$, $t\mapsto \rho(t) u(t)$, and 
		$t\mapsto \rho(t)\left(e(t)+ \frac{1}{2}u^2(t)\right)$ 
		belong to $C^0(\RR;L^1_{loc}(\RR^n))$;
		\item[(B)]  the functions $\rho u^2$, $p$, and $\left[\rho\left(e+ \frac{1}{2}u^2\right)+p\right] u$
		belong to $L^1_{loc}(\RR\times\RR^n)$;
		\item[(C)] the conservation laws for mass, momentum, and energy are 
		satisfied weakly in sense that
		\begin{align}
			\int_\RR\int_{\RR^n} \rho\vp_t+\rho{\bf u}\cdot\nabla_{\bf x}\vp
			\, d{\bf x}dt &=0\label{m_d_mass_weak}\\
			\int_\RR\int_{\RR^n} \rho u_i\vp_t
			 +\rho u_i{\bf u}\cdot\nabla_{\bf x}\vp+p\vp_{x_i}\, d{\bf x}dt &=0
			 \qquad \text{for $i=1,\dots, n$,} \label{m_d_mom_weak}\\
			\int_\RR\int_{\RR^n} \rho\big(e+\textstyle \frac{1}{2}u^2\big)\vp_t
			 +\left[\rho\left(e+ \frac{1}{2}u^2\right)+p\right] {\bf u}\cdot\nabla_{\bf x}\vp\, d{\bf x}dt &=0
			 \label{m_d_energy_weak}
		\end{align}
		whenever $\vp\in C_c^1(\RR\times \RR^n)$ (the set of $C^1$ functions with compact support).
	\end{itemize}
\end{definition}
\begin{remark}
	Condition {\em (A)} requires that the conserved quantities define
	continuous maps into $L^1_{loc}(\RR^n)$, which is the natural function
	space in this setting. Conditions {\em (A)} and {\em (B)} ensure that all terms occurring 
	in the weak formulations \eq{m_d_mass_weak}-\eq{m_d_energy_weak} are 
	locally integrable in space and time.
\end{remark}
We first rewrite Definition \ref{weak_soln} for radial solutions.
Set $r=|\bf x|$, $m=n-1$, $\RR_0^+=[0,\infty)$,
\[ L^1_{loc}(r^mdr)=L^1_{loc}(\RR^+_0, r^mdr),\qquad
\text{and}\qquad L^1_{loc}(dt\times r^mdr)=L^1_{loc}(\RR\times\RR^+_0,dt\times r^mdr).\]
It is convenient to introduce the following (nonstandard) notation:
$C^1_c(\RR\times\RR^+_0)$ denotes the set of real-valued $C^1$ functions 
$\psi:\RR\times\RR^+_0\to\RR$ that vanish outside $[-\bar t,\bar t]\times[0,\bar r]$, 
for some $\bar t,\, \bar r\in\RR^+$. Finally, $C^1_0(\RR\times\RR^+_0)$ denotes 
the set of those $\psi\in C^1_c(\RR\times\RR^+_0)$ that also  
satisfy $\psi(t,0)\equiv 0$.

\begin{definition}\label{rad_symm_weak_soln}
	With the same setup as in Definition \ref{weak_soln}, 
	the measurable functions $\rho,\, u,\, e:\RR\times \RR^+_0\to \RR$,
	with $\rho\geq0$ and $e\geq0$,  
	constitute a {\em radial weak solution} of \eq{m_eul}-\eq{ener_eul} provided that:
	\begin{itemize}
		\item[(i)]  the maps $t\mapsto \rho(t)$, $t\mapsto \rho(t) u(t)$, and 
		$t\mapsto \rho(t)\left(e(t)+ \frac{1}{2}u^2(t)\right)$ 
		belong to $C^0(\RR;L^1_{loc}(r^mdr))$;
		\item[(ii)]  the functions $\rho u^2$, $p$, and $\left[\rho\left(e+ \frac{1}{2}u^2\right)+p\right] u$
		belong to $L^1_{loc}(dt\times r^mdr)$;
		\item[(iii)] the conservation laws for mass, momentum, and energy are 
		satisfied in the sense that
		\begin{align}
			\int_{\RR}\int_{\RR^+} \left(\rho\psi_t+\rho u\psi_r\right) r^mdrdt &=0 
			\qquad\forall \psi\in C^1_c(\RR\times\RR^+_0) \label{radial_mass_weak}\\
			\int_{\RR}\int_{\RR^+} \left(\rho u\psi_t
			+\rho u^2\psi_r+p\big(\psi_r+\textstyle\frac{m\psi}{r}\big)\right) r^mdrdt &=0 
			\qquad\forall \psi\in C^1_0(\RR\times\RR^+_0)\label{radial_mom_weak}\\
			\int_{\RR}\int_{\RR^+} \Big(\rho\big(e+\textstyle \frac{1}{2}u^2\big)\psi_t
			 +\left[\rho\left(e+ \frac{1}{2}u^2\right)+p\right] u\psi_r\Big) r^mdrdt &=0 
			\qquad\forall \psi\in C^1_c(\RR\times\RR^+_0).\label{radial_ener_weak}
		\end{align}
	\end{itemize}
\end{definition}
\noindent We record the fact that the latter definition is consistent with the former one:
\begin{proposition}\label{hoff_prop}
	Assume that $(\rho(t,r),u(t,r),e(t,r))$ is a radial weak solution of  
	\eq{m_eul}-\eq{ener_eul} according to Definition 
	\ref{rad_symm_weak_soln}. Then the corresponding multi-d 
	quantities defined according to \eq{assmbld} provide a weak solution 
	of the multi-d system  \eq{mass_m_d_full_eul}-\eq{energy_m_d_full_eul}
	according to Definition \ref{weak_soln}.
\end{proposition}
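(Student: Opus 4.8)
The plan is to deduce conditions (A)--(C) of Definition \ref{weak_soln} from the radial conditions (i)--(iii) of Definition \ref{rad_symm_weak_soln}, the basic tool throughout being the polar decomposition $\mathbf{x}=r\omega$, $\omega\in S^{n-1}$, under which $d\mathbf{x}=r^m\,dr\,dS(\omega)$ with $m=n-1$ and $dS$ the surface measure on the unit sphere. Conditions (A) and (B) follow directly: each density in (A) ($\rho$, $\rho u_i$, $\rho(e+\frac12|\mathbf{u}|^2)$) and each flux in (B) ($\rho u_iu_j$, $p$, and the energy flux) is either radial or dominated in absolute value by a radial quantity appearing in (i)--(ii) (e.g.\ $|\rho u_i|\le\rho|u|$ and $|\rho u_iu_j|\le\rho u^2$, since $|\mathbf{u}|=|u|$); integrating over balls via the polar formula and invoking (i)--(ii) yields membership in $C^0(\RR;L^1_{loc}(\RR^n))$ and $L^1_{loc}(\RR\times\RR^n)$, respectively. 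The substance is condition (C), which I would establish by reducing each multi-d weak identity, for an arbitrary $\varphi\in C^1_c(\RR\times\RR^n)$, to the corresponding radial identity tested against a suitable spherical moment of $\varphi$. Because the densities and fluxes are $L^1_{loc}$ and $\varphi$ is compactly supported and $C^1$, all integrals are absolutely convergent (the blowup point $(0,0)$ has measure zero), so Fubini in polar coordinates applies.

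For the scalar mass and energy laws I would use the zeroth moment $\bar\varphi(t,r):=\int_{S^{n-1}}\varphi(t,r\omega)\,dS(\omega)$, which lies in $C^1_c(\RR\times\RR^+_0)$. Using that $\rho$, $e$, $p$ are radial, that $\mathbf{u}=u\,\mathbf{x}/r$, and hence $\mathbf{u}\cdot\nabla_{\mathbf{x}}\varphi=u\,\partial_r\varphi$ with $\partial_r\varphi=\tfrac{\mathbf{x}}{r}\cdot\nabla\varphi$, the angular integration (with $\int_{S^{n-1}}\partial_r\varphi\,dS=\partial_r\bar\varphi$, by differentiating under the integral) turns the left side of \eq{m_d_mass_weak} into $\int_\RR\int_{\RR^+}(\rho\,\partial_t\bar\varphi+\rho u\,\partial_r\bar\varphi)\,r^m\,dr\,dt$, which vanishes by \eq{radial_mass_weak} with $\psi=\bar\varphi$. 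The energy identity \eq{m_d_energy_weak} reduces in exactly the same way to \eq{radial_ener_weak} with $\psi=\bar\varphi$.

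The momentum law is the crux and requires the first moment $\Phi_i(t,r):=\int_{S^{n-1}}\omega_i\,\varphi(t,r\omega)\,dS(\omega)$. Three points are key. First, writing $u_i=u\,x_i/r$ and $\mathbf{u}\cdot\nabla_{\mathbf{x}}\varphi=u\,\partial_r\varphi$, the angular integrals of the transport terms in \eq{m_d_mom_weak} produce $\rho u\,\partial_t\Phi_i$ and $\rho u^2\,\partial_r\Phi_i$. Second, for the pressure term one needs the identity
\[
	\int_{S^{n-1}}\partial_{x_i}\varphi(t,r\omega)\,dS(\omega)=\partial_r\Phi_i+\textstyle\frac{m}{r}\Phi_i,
\]
which I would prove by applying the divergence theorem to $\varphi\,\mathbf{e}_i$ on the ball $B_r$ (giving $\int_0^rA(s)s^{m}\,ds=r^{m}\Phi_i$ with $A(s)=\int_{S^{n-1}}\partial_{x_i}\varphi(s\omega)\,dS$, then differentiating in $r$); this is exactly what reproduces the geometric source term $p(\psi_r+\frac{m\psi}{r})$ in \eq{radial_mom_weak}. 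Thus the $i$-th identity \eq{m_d_mom_weak} becomes $\int_\RR\int_{\RR^+}(\rho u\,\partial_t\Phi_i+\rho u^2\,\partial_r\Phi_i+p(\partial_r\Phi_i+\frac{m\Phi_i}{r}))\,r^m\,dr\,dt$, matching \eq{radial_mom_weak} with $\psi=\Phi_i$. Third, and crucially, $\Phi_i$ must be admissible for \eq{radial_mom_weak}, i.e.\ $\Phi_i\in C^1_0(\RR\times\RR^+_0)$: it is $C^1$ and compactly supported since $\varphi$ is, and it satisfies the origin condition $\Phi_i(t,0)=\varphi(t,0)\int_{S^{n-1}}\omega_i\,dS(\omega)=0$ by oddness of $\omega_i$ on the sphere. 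Invoking \eq{radial_mom_weak} then closes the argument.

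I expect the main obstacle to be this momentum reduction, for two linked reasons. The vector structure forces the first moment $\Phi_i$ rather than the plain average, and one must verify the divergence-theorem identity above so that the angular integral of $p\,\partial_{x_i}\varphi$ reproduces precisely the geometric term $\frac{m\psi}{r}$ carried by the radial formulation. The more conceptual point is the origin condition $\Phi_i(t,0)=0$: the radial momentum identity \eq{radial_mom_weak} is only assumed for $\psi\in C^1_0$ (vanishing at $r=0$), and it is exactly the vanishing of the first moment at the center---together with $\Phi_i\in C^1$---that both makes $\Phi_i$ admissible and secures the near-origin integrability of the $p\,\frac{m\Phi_i}{r}\,r^m$ contribution, where $\rho$, $u$, $p$ may be large. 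Away from this point, the mass and energy reductions are routine and (A)--(B) follow directly from the polar change of variables.
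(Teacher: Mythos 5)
Your argument is correct, but note that the paper does not actually prove this proposition in-house: its entire proof is a citation to Proposition V.I of \cite{jt1}, so there is no internal argument to compare against. Your self-contained reduction is the natural one and, as far as I can tell, complete. Conditions (A)--(B) do follow immediately from (i)--(ii) by the polar change of variables together with the pointwise bounds $|\rho u_i|\le\rho|u|$, $|\rho u_iu_j|\le\rho u^2$. For (C) you correctly isolate the two genuinely delicate points. First, the identity $\int_{S^{n-1}}\partial_{x_i}\varphi(t,r\omega)\,dS(\omega)=\partial_r\Phi_i+\frac{m}{r}\Phi_i$, which your divergence-theorem argument on $B_r$ (i.e.\ $\int_0^rA(s)s^m\,ds=r^m\Phi_i(r)$, then differentiating) establishes cleanly; this is exactly what converts the multi-d pressure term into the geometric source $p(\psi_r+\frac{m\psi}{r})$ of \eq{radial_mom_weak}. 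Second, the admissibility of the first moment: \eq{radial_mom_weak} is only assumed for $\psi\in C^1_0(\RR\times\RR^+_0)$, and your observation that $\Phi_i(t,0)=\varphi(t,0)\int_{S^{n-1}}\omega_i\,dS=0$ by oddness, together with $\Phi_i\in C^1$ so that $|\Phi_i(t,r)|\lesssim r$ near the origin, both makes $\Phi_i$ a legitimate test function and keeps $p\,\frac{m\Phi_i}{r}r^m$ locally integrable where the flow variables blow up. The zeroth-moment reductions for mass and energy, and the Fubini justification via local integrability (the blowup point being a null set), are routine as you say. I see no gap.
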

\begin{proof}
	This was established in \cite{jt1} (Proposition V.I). 
\end{proof}
For a given similarity flow $(\rho(t,r),u(t,r),c(t,r))$ constructed as in Section \ref{constr},
and with $e$ and $p$ defined by \eq{ec} and  \eq{pressure1}, respectively, 
we proceed to verify the conditions in Definition \ref{rad_symm_weak_soln}. 
We recall that the construction in Section \ref{constr} was carried out under the 
following conditions on the parameters $\kappa$ and $\lambda$:
\beq\label{params}
	\kappa=-\textstyle\frac{2(\lambda-1)}{\gamma-1}, \qquad 
	1<\lambda<\bar\lambda(\gamma,n):=1+\textstyle\frac{n}{2}(1-\frac{1}{\gamma});
\eeq
see Sections \ref{lam_kap_constrs} and \ref{t<0}. To simplify notation we let $B$ denote the 
$x$-value corresponding to the expanding shock (i.e., $B=x_\s$). Also, note that $\rho e\propto p\propto \rho c^2$,
and that the density is given via \eq{CR} as
\beq\label{rho_c}
	\rho(t,r)=r^\kappa R(t/r^\lambda)\propto r^\kappa
	\textstyle\left|\frac{C(t/r^\lambda)}{t/r^\lambda}\right|^\frac{2}{\gamma-1}.
\eeq

Concerning the conditions in Definition \ref{rad_symm_weak_soln}, 
non-negativity of $\rho$ and $e$ hold by construction. The remaining conditions 
(i)-(iii) of Definition \ref{rad_symm_weak_soln} are analyzed as follows.

\subsection{Integrability and continuity}\label{int_cont}
\begin{lemma}\label{L^1_loc}
	With the setup above we have that the maps $t\mapsto \rho(t)|u(t)|^k$, for $k=0,1,2$, and $t\mapsto \rho(t)c(t)^2$
	all take values in $L^1_{loc}(r^mdr)$.
\end{lemma}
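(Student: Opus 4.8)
The plan is to reduce the entire statement to the behavior of the integrands as $r\downarrow0$. Indeed, for each fixed $t$ the fields $\rho,u,c$ are continuous (smooth away from the expanding shock, and only jumping by a bounded amount across $r=r_\s(t)$), so the four integrands $\rho|u|^k$ and $\rho c^2$ are bounded on every interval bounded away from the origin. Since the measure $r^mdr$ is finite on $[0,\bar r]$, membership in $L^1_{loc}(r^mdr)$ therefore hinges only on the rate of growth as $r\downarrow0$. I would split into the cases $t=0$ and $t\neq0$, because the similarity variable $x=t/r^\lambda$ behaves oppositely in the two regimes: it is pinned at $x=0$ when $t=0$, but runs to $\pm\infty$ when $t\neq0$.

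First, the slice $t=0$. Here $x\equiv0$ and \eq{at_collapse} gives the exact power laws $\rho(0,r)=R(0)r^\kappa$, with $|u(0,r)|$ and $c(0,r)$ both proportional to $r^{1-\lambda}$. Hence $\rho|u|^k\propto r^{\kappa+k(1-\lambda)}$ for $k=0,1,2$ and $\rho c^2\propto r^{\kappa+2(1-\lambda)}$. Integrating $r^p$ against $r^mdr=r^{n-1}dr$ near $r=0$ is finite exactly when $p>-n$; for $k=0,1,2$ this is precisely conditions (I), (II), (III) of Section \ref{lam_kap_constrs}, and $\rho c^2$ reproduces (III). All of these hold under the standing assumptions \eq{params} (they collapse to the single bound \eq{1st_lam_constr}), so the $t=0$ slice is settled.

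Next, fixed $t\neq0$. As $r\downarrow0$ we have $|x|=|t|/r^\lambda\to\infty$, so the large-$|x|$ estimates in \eq{VC_props} apply: $|C(x)/x|\lesssim|x|^{1/\lambda-1}$ and $|V(x)|\lesssim1$. Substituting $|x|=|t|/r^\lambda$ gives $|C(x)/x|\lesssim_t r^{\lambda-1}$, whence by \eq{rho_c} and the isentropic value \eq{isentr_kappa} one gets $\rho(t,r)\lesssim_t r^{\kappa+(\lambda-1)\cdot 2/(\gamma-1)}=r^{0}$, the exponent vanishing exactly because $\kappa=-2(\lambda-1)/(\gamma-1)$. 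The same substitution yields $c(t,r)\lesssim_t r^{1-\lambda}\cdot r^{\lambda-1}=r^{0}$ (bounded) and $|u(t,r)|\lesssim_t r^{1-\lambda}\cdot r^{\lambda}=r\to0$. Thus at every fixed $t\neq0$ all four integrands are bounded as $r\downarrow0$, hence trivially lie in $L^1_{loc}(r^mdr)$; the $t$-dependent constants are irrelevant since the claim is only pointwise in $t$.

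The substantive point, and the step I would watch most carefully, is the exact cancellation of exponents forced by the isentropic $\kappa$-value \eq{isentr_kappa}: for $t\neq0$ it is this cancellation that turns the a priori growth of $|C/x|^{2/(\gamma-1)}$ into a bounded density, while the only genuine singularity — at the single spacetime point $(0,0)$ — is absorbed by (I)-(III). Everything else is bookkeeping with power laws, so I anticipate no hidden obstacle; the only delicate accounting is verifying that the $t=0$ exponents reproduce (I)-(III) and that the $t\neq0$ exponent collapses to $0$, both of which follow directly from \eq{isentr_kappa}.
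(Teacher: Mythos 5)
Your proof is correct and follows essentially the same route as the paper's: at $t=0$ you use the exact power laws \eq{at_collapse} and reduce to conditions (I)--(III) (equivalently $\lambda<\bar\lambda(\gamma,n)$), and for $t\neq0$ you use the large-$|x|$ bounds of \eq{VC_props} together with the cancellation forced by the isentropic $\kappa$-value, which is exactly the computation in the paper (phrased there as a direct estimate of the integral rather than pointwise boundedness of $\rho$, $u$, $c$). No gaps.
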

\begin{proof}
	Fix $\bar r>0$ and any $t\in \RR$. Consider first the case that $t\neq 0$. 
	Using \eq{alt_sim_vars} and \eq{rho_c} we have
	\beq\label{rho_1}
		\|\rho(t)u(t)^k\|_{L^1((0,\br);r^mdr)} 
		\lesssim \int_0^{\br} r^{\kappa+m+k(1-\lambda)}
		\textstyle\left|\frac{C(t/r^\lambda)}{t/r^\lambda}\right|^\frac{2}{\gamma-1}
		\left|\frac{V(t/r^\lambda)}{t/r^\lambda}\right|^k\, dr,
	\eeq
	and 
	\beq\label{c_1}
		\|\rho(t)c(t)^2\|_{L^1((0,\br);r^mdr)} 
		\lesssim \int_0^{\br} r^{\kappa+m+2(1-\lambda)}
		\textstyle\left|\frac{C(t/r^\lambda)}{t/r^\lambda}\right|^\frac{2\gamma}{\gamma-1}\, dr,
	\eeq
	According to \eq{VC_props} we have that both $|V(x)|$ and $\left|\frac{C(x)}{x}\right|$ are globally 
	bounded, with the latter satisfying
	\beq\label{C_est}
		\left|\textstyle\frac{C(x)}{x}\right|\lesssim x^{\frac{1}{\lambda}-1}.
	\eeq
	Using this in \eq{rho_1} and \eq{c_1} gives
	\[\|\rho(t)u(t)^k\|_{L^1((0,\br);r^mdr)} 
	\lesssim |t|^{\frac{\kappa}{\lambda}-k} \int_0^{\br} r^{m+k}\, dr<\infty\qquad\text{for $t\neq0$,}\]
	and
	\[\|\rho(t)c(t)^2\|_{L^1((0,\br);r^mdr)} 
	\lesssim |t|^\frac{\kappa\gamma}{\lambda} \int_0^{\br} r^{m}\, dr<\infty\qquad\text{for $t\neq0$.}\]
	These expressions are not useful at $t=0$ (since $\kappa<0$); instead we use that 
	$V(x)$ and $C(x)$ were constructed to meet the conditions in 
	\eq{condition1} (with $\nu$ and $\mu$ being finite real numbers; see Section \ref{t<0}).
	The flow variables are therefore given by \eq{at_collapse}, which yields
	\[\|\rho(0)u(0)^k\|_{L^1((0,\br);r^mdr)} 
	\propto \int_0^{\br} r^{\kappa+m+k(1-\lambda)}\, dr\qquad (k=0,1,2).\]
	It is immediate to verify the following: these integrals are finite if and only if the 
	integral with $k=2$ is finite; this condition is the same as that guaranteeing
	finiteness of $\|\rho(0)c(0)^2\|_{L^1((0,\br);r^mdr)}$; and it holds if and only if 
	$\lambda<\bar\lambda(\gamma,n)$. The latter condition holds by assumption
	(see \eq{params}), concluding the proof.
\end{proof}
We next establish the continuity of these maps.
\begin{lemma}\label{cont_into_L^1_loc}
	With the setup above we have that the maps $t\mapsto \rho(t)|u(t)|^k$, for $k=0,1,2$, 
	and $t\mapsto \rho(t)c(t)^2$ all belong to $C^0(\RR;L^1_{loc}(r^mdr))$.
\end{lemma}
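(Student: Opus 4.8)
The plan is to upgrade the pointwise-in-$t$ integrability established in Lemma \ref{L^1_loc} to continuity of the maps $t\mapsto \rho(t)|u(t)|^k$ and $t\mapsto \rho(t)c(t)^2$ as maps from $\RR$ into $L^1_{loc}(r^mdr)$. The only possible source of trouble is the time of collapse $t=0$, where amplitude blowup occurs at the origin. Away from $t=0$ the integrands are continuous in $(t,r)$ on the relevant compact regions (with the shock curve $x=B$ contributing only a moving jump of the \emph{integrand}, not of the integral), so $L^1_{loc}$-continuity for $t\neq 0$ will follow from standard dominated-convergence arguments once a uniform-in-$t$ integrable majorant is produced on a small time interval. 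Concretely, for a fixed $\bar r>0$ and $t$ ranging over a compact set, I would bound each integrand by a single $r$-dependent function that is integrable against $r^m\,dr$ on $(0,\bar r)$, using the global bounds $|V(x)|\lesssim 1$ and $|C(x)/x|\lesssim |x|^{\frac1\lambda-1}$ from \eq{VC_props}.

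The substantive step is continuity \emph{at} $t=0$, i.e.\ showing
\[
\lim_{t\to 0}\ \big\|\rho(t)|u(t)|^k-\rho(0)|u(0)|^k\big\|_{L^1((0,\bar r);\,r^mdr)}=0,
\]
and similarly for $\rho c^2$. I would split the integral at a threshold radius $r=\delta$. On the inner region $(0,\delta)$ I would use that the two estimates from the proof of Lemma \ref{L^1_loc}—namely $\lesssim |t|^{\frac{\kappa}{\lambda}-k}\int_0^\delta r^{m+k}\,dr$ for $t\neq0$ and $\propto\int_0^\delta r^{\kappa+m+k(1-\lambda)}\,dr$ at $t=0$—are both small uniformly in $t$ when $\delta$ is small, since the exponent $\kappa+m+k(1-\lambda)>-1$ under \eq{params}; the key is to obtain a bound on the inner tail that is \emph{uniform} in $t$ near $0$, which I expect to follow by rescaling the $r$-integral via the similarity variable $x=t/r^\lambda$ so that the inner contribution is controlled by the $x\to-\infty$/$x\to+\infty$ asymptotics $|C(x)/x|\lesssim|x|^{\frac1\lambda-1}$ uniformly. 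On the outer region $(\delta,\bar r)$ the variable $x=t/r^\lambda\to 0$ as $t\to 0$, so by \eq{condition1} the integrands converge pointwise to their $t=0$ values given by \eq{at_collapse}, and dominated convergence (with the majorant from the first paragraph) gives convergence to zero of the outer difference.

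The main obstacle will be the inner-region estimate near $t=0$: one must rule out any concentration of mass, momentum, or energy at the origin as $t\uparrow0$ and $t\downarrow0$, i.e.\ show that the blowup at the single point $(t,r)=(0,0)$ is integrable uniformly in a time-neighborhood and contributes no atom in the limit. I would handle this by the change of variables $r\mapsto x=t/r^\lambda$, under which each inner integral becomes, up to constants, $|t|^{\text{(exponent)}}$ times an integral of $|C(x)/x|$-powers over a half-line in $x$; the exponent of $|t|$ is nonnegative precisely because $\lambda<\bar\lambda(\gamma,n)$ (equivalently $\kappa+m+k(1-\lambda)>-1$), and the $x$-integral converges by \eq{VC_props}. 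This yields an inner bound of the form $C\,\delta^{\,\kappa+m+k(1-\lambda)+1}$ uniform for small $|t|$, which can be made arbitrarily small by choosing $\delta$ small—closing the argument. Combining the uniform smallness on $(0,\delta)$ with dominated convergence on $(\delta,\bar r)$ establishes continuity at $t=0$, and together with the $t\neq0$ case this proves the lemma.
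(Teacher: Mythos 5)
Your proposal is correct and follows essentially the same route as the paper: dominated convergence for $s\to t\neq 0$ using the global bound $|C(x)/x|\lesssim|x|^{\frac{1}{\lambda}-1}$, and at $t=0$ the same two-regime estimate from \eq{VC_props} (splitting at $|x|=1$, i.e.\ at $r=|t|^{1/\lambda}$) combined with the exponent condition $\kappa+m+k(1-\lambda)>-1$ equivalent to $\lambda<\bar\lambda(\gamma,n)$. The only difference is organizational: you phrase the $t=0$ case as uniform smallness of an inner tail $(0,\delta)$ plus convergence on $(\delta,\bar r)$, whereas the paper packages the same estimates into a single $s$-independent $L^1$ majorant $r^{\kappa+m+k(1-\lambda)}$ and applies dominated convergence once.
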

\begin{proof}
	Since the argument is similar for all cases we give the details 
	only for the map $t\mapsto \rho(t)c(t)^2$. Fix $\br>0$ and any $t\in\RR$. For
	$s,t\neq 0$, \eq{alt_sim_vars} and \eq{rho_c} give
	\[\|\rho(t)c(t)^2-\rho(s)c(s)^2\|_{L^1((0,\br);r^mdr)} \lesssim \int_0^{\br} r^{\kappa+m+2(1-\lambda)}
	\textstyle\left|\left|\frac{C(t/r^\lambda)}{t/r^\lambda}\right|^\frac{2\gamma}{\gamma-1}
	-\left|\frac{C(s/r^\lambda)}{s/r^\lambda}\right|^\frac{2\gamma}{\gamma-1}\right|\, dr.\]
	Since $x\mapsto C(x)$ is continuous almost everywhere, the integrand
	tends pointswise a.e.\ to zero as $s\to t$. Applying \eq{C_est} we get that 
	the integrand in the last integral is bounded by
	\beq\label{intermediate}
		r^{\kappa+m+2(1-\lambda)} (|t|^\frac{\kappa\gamma}{\lambda}+
		|s|^\frac{\kappa\gamma}{\lambda})r^{-\kappa\gamma}
		= (|t|^\frac{\kappa\gamma}{\lambda}+|s|^\frac{\kappa\gamma}{\lambda})r^m,
	\eeq
	which is uniformly bounded in $L^1((0,\br);dr)$ as $s\to t\neq0$.
	An application of the Dominated Convergence Theorem therefore gives 
	$\rho(s)u(s)^k\to \rho(t)u(t)^k$ in $L^1((0,\br);r^mdr)$ when $s\to t\neq0$.

	Due to the negative time exponent $\frac{\kappa\gamma}{\lambda}$ in \eq{intermediate} 
	a slightly different argument is required for continuity at time $t=0$. 
	Using \eq{condition1}, \eq{at_collapse}, \eq{alt_sim_vars}, and \eq{rho_c} we have
	\beq\label{dct_0}
		\|\rho(0)c(0)^2-\rho(s)c(s)^2\|_{L^1((0,\br);r^mdr)} 
		\sim \int_0^{\br} r^{\kappa+m+2(1-\lambda)}
		\textstyle\left|\mu^\frac{2\gamma}{\gamma-1}
		-\left|\frac{C(s/r^\lambda)}{s/r^\lambda}\right|^\frac{2\gamma}{\gamma-1}\right|\, dr.
	\eeq
	Since $\frac{C(x)}{x}\to\mu$ as $x\to 0$, the integrand in \eq{dct_0} tends pointwise to zero
	as $s\to 0$. To uniformly bound the integrand we recall \eq{VC_props}:
	\[\textstyle\left|\frac{C(x)}{x}\right|\lesssim 1
		\quad\text{for $|x|\leq 1$, and}\qquad
		\left|\frac{C(x)}{x}\right|\lesssim |x|^{\frac{1}{\lambda}-1}
		\qquad\text{for $|x|\geq 1$.}\]
Using this in \eq{dct_0} we get that, for $|s|\leq \br^\lambda$, the integrand is bounded (up to a multiplicative constant) by 
\begin{align*}
r^{\kappa+m+2(1-\lambda)}
\textstyle\left(1+\chi_{[0,|s|^\frac{1}{\lambda}]}(r)
\left(\frac{|s|}{r^\lambda}\right)^{\frac{\kappa\gamma}{\lambda}} \right)
\lesssim r^{\kappa+m+2(1-\lambda)},
\end{align*}
where in the last step we have used that $\kappa<0$. Finally, as noted in the proof of 
Lemma \ref{L^1_loc}, the assumption that $\lambda<\bar\lambda(\gamma,n)$ 
is equivalent to $\kappa+m+2(1-\lambda)>-1$, showing that the integrand in 
\eq{dct_0} is bounded by a fixed $L^1((0,\br);dr)$-function for all $|s|\leq \br^\lambda$. 
The Dominated Convergence Theorem therefore
gives that $\rho(s)c(s)^2\to \rho(0)c(0)^2$ in $L^1((0,\br);r^mdr)$ as $s\to 0$.
\end{proof}
As $\rho e\propto \rho c^2$, Lemma \ref{cont_into_L^1_loc} establishes condition (i) 
of Definition \ref{rad_symm_weak_soln}. By continuity, and since $p\propto \rho c^2$, 
it also provides a part of condition (ii):
\begin{corollary}\label{L^1_loc_t_x_1}
	With the setup above, the functions $\rho u^2$ and $p$ belong to $L^1_{loc}(dt\times r^mdr)$.
\end{corollary}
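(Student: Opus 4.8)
The plan is to deduce the joint space-time local integrability directly from the continuity-in-time result already established, using nothing more than Tonelli's theorem. The observation driving the argument is that for a nonnegative measurable function $f(t,r)$ on $\RR\times\RR^+_0$, membership in $L^1_{loc}(dt\times r^mdr)$ is equivalent to local integrability in $t$ of the slice-norms $t\mapsto\|f(t)\|_{L^1((0,\bar r);r^mdr)}$. I would apply this with $f=\rho u^2$ and $f=\rho c^2$ (recalling $p\propto\rho c^2$), both nonnegative by construction.

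First I would invoke Lemma \ref{cont_into_L^1_loc}, which gives that the maps $t\mapsto \rho(t)u(t)^2$ and $t\mapsto \rho(t)c(t)^2$ belong to $C^0(\RR;L^1_{loc}(r^mdr))$. Fixing any $\bar r>0$, continuity of these $L^1_{loc}(r^mdr)$-valued maps forces (by the reverse triangle inequality) the real-valued norm functions $t\mapsto\|\rho(t)u(t)^2\|_{L^1((0,\bar r);r^mdr)}$ and $t\mapsto\|\rho(t)c(t)^2\|_{L^1((0,\bar r);r^mdr)}$ to be continuous on all of $\RR$. On any compact time interval $[-\bar t,\bar t]$ such continuous functions are bounded, say by a constant $M=M(\bar t,\bar r)$. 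Then, since $\rho u^2\geq 0$ and $\rho c^2\geq 0$, Tonelli's theorem yields
\[
\int_{-\bar t}^{\bar t}\!\!\int_0^{\bar r}\rho u^2\, r^m\,dr\,dt
=\int_{-\bar t}^{\bar t}\big\|\rho(t)u(t)^2\big\|_{L^1((0,\bar r);r^mdr)}\,dt
\leq 2\bar t\,M<\infty ,
\]
and identically for $\rho c^2$. As $\bar t,\bar r>0$ are arbitrary, this gives $\rho u^2,\rho c^2\in L^1_{loc}(dt\times r^mdr)$, whence $p\in L^1_{loc}(dt\times r^mdr)$ follows from $p\propto\rho c^2$.

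I expect no substantive obstacle here. The only points requiring a word are the measurability of $\rho u^2$ and $p$ on $\RR\times\RR^+_0$, which is immediate from the explicit self-similar formulas \eq{alt_sim_vars} and \eq{rho_c} (continuous off the measure-zero shock locus $t=x_\s r^\lambda$ and the origin), and the legitimacy of identifying the iterated integral with the double integral, which is guaranteed by nonnegativity through Tonelli. In effect the entire analytic content has already been absorbed into Lemmas \ref{L^1_loc} and \ref{cont_into_L^1_loc}; the corollary is a soft consequence, promoting continuity into $L^1_{loc}(r^mdr)$ to integrability against $dt\times r^mdr$.
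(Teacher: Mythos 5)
Your argument is correct and is exactly the route the paper takes: the corollary is stated as an immediate consequence of Lemma \ref{cont_into_L^1_loc}, since continuity of $t\mapsto\rho(t)u(t)^2$ and $t\mapsto\rho(t)c(t)^2$ into $L^1_{loc}(r^mdr)$ gives local boundedness of the slice norms, and Tonelli (plus $p\propto\rho c^2$) finishes the job. Your write-up just spells out the details the paper leaves implicit.
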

For condition (ii) it remains to verify that the function $\left[\rho\left(e+ \frac{1}{2}u^2\right)+p\right]\! u$
belongs to $L^1_{loc}(dt\times r^mdr)$. (Note that this condition is not present in the isentropic 
case considered in \cite{jt3}.) Again, since $\rho e,p\propto \rho c^2$, this follows once we establish 
the following:
\begin{lemma}\label{L^1_loc_t_x_2}
	With the setup above, the functions $\rho u^3$ and $\rho c^2 u$ belong to 
	$L^1_{loc}(dt\times r^mdr)$.
\end{lemma}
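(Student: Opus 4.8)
The plan is to verify the single required property—space–time local integrability—directly, by reducing each of $\rho u^3$ and $\rho c^2u$ to an explicit integral over a box $[-\bar t,\bar t]\times[0,\bar r]$ and showing it is finite (away from $r=0$ the integrand is continuous and bounded, so only the corner at the origin matters). Using \eq{alt_sim_vars} and \eq{rho_c}, a short computation shows that both functions are, up to a multiplicative constant, of the form
\beq
	r^{\kappa+3(1-\lambda)}\left|\tfrac{C(x)}{x}\right|^{p_1}\left|\tfrac{V(x)}{x}\right|^{p_2},\qquad x=\tfrac{t}{r^\lambda},
\eeq
with $(p_1,p_2)=(\tfrac{2}{\gamma-1},3)$ for $\rho u^3$ and $(p_1,p_2)=(\tfrac{2\gamma}{\gamma-1},1)$ for $\rho c^2u$. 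Multiplying by $r^m$ and carrying out the change of variables $t\mapsto x$ at fixed $r$, under which $dt=r^\lambda\,dx$, converts the space–time integral into
\beq
	\int_0^{\bar r}r^{\beta}\left(\int_{-\bar t/r^\lambda}^{\,\bar t/r^\lambda}\left|\tfrac{C(x)}{x}\right|^{p_1}\left|\tfrac{V(x)}{x}\right|^{p_2}\,dx\right)dr,\qquad \beta:=\kappa+n+2(1-\lambda).
\eeq

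The key step is to bound the inner $x$–integral uniformly in $r$. I would split the $x$–range into $|x|\le1$ and $|x|\ge1$ and invoke \eq{VC_props}. On $|x|\le1$ both $|C/x|$ and $|V/x|$ are bounded, so that contribution is $\lesssim 1$. On $|x|\ge1$ the estimates $|C/x|\lesssim|x|^{\frac1\lambda-1}$ and $|V/x|=|V|/|x|\lesssim|x|^{-1}$ give an integrand $\lesssim|x|^{\delta}$ with $\delta:=(\tfrac1\lambda-1)p_1-p_2$. Since $\lambda>1$ forces $\tfrac1\lambda-1<0$, one checks immediately that $\delta<-3<-1$ for $\rho u^3$ and $\delta<-1$ for $\rho c^2u$; hence $\int_{|x|\ge1}|x|^{\delta}\,dx<\infty$ and the inner integral is bounded by a constant independent of $r$. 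The whole integral is then controlled by $\int_0^{\bar r}r^{\beta}\,dr$, which is finite precisely when $\beta>-1$. A direct calculation with $\kappa=-\tfrac{2(\lambda-1)}{\gamma-1}$ shows $\beta>-1$ is equivalent to $\lambda<\bar\lambda(\gamma,n)+\tfrac{\gamma-1}{2\gamma}$, a bound strictly weaker than $\lambda<\bar\lambda(\gamma,n)$; thus $\beta>-1$ holds comfortably under the standing assumption \eq{params}.

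The point deserving care—and the reason this lemma must be treated separately—is that one cannot argue via a fixed-time $L^1$ bound followed by continuity, as was done for the quadratic terms in Corollary \ref{L^1_loc_t_x_1}. Indeed, by \eq{at_collapse} the slice $t=0$ of $\rho u^3$ is $\propto r^{\kappa+3(1-\lambda)}$, whose integral against $r^m\,dr$ is finite only for $\lambda<1+\tfrac{n(\gamma-1)}{3\gamma-1}$, a bound strictly smaller than $\bar\lambda(\gamma,n)$ since $\gamma>1$. Hence for $\lambda$ near $\bar\lambda$ the $t=0$ slice of $\rho u^3$ need not be integrable, and the finiteness of the space–time integral is recovered solely through the extra factor $r^\lambda$ produced by the $t$-integration. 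I expect the only remaining technical points to be the bookkeeping of the exponents $p_1,p_2,\beta,\delta$, and the observation that the jump of $(V,C)$ at the shock $x=x_\s$ (where both stay finite) and the behavior as $|x|\to\infty$ are already subsumed by the global estimates \eq{VC_props}.
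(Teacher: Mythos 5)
Your proof is correct, and it follows the same basic strategy as the paper's (pass to the similarity variable $x$, invoke the global bounds \eq{VC_props}, and reduce to a power-counting condition that follows from $\lambda<\bar\lambda(\gamma,n)$), but with the order of integration reversed: the paper first changes variables $r\mapsto x$ at fixed $t$, which forces a split into $t\gtrless 0$ and then $x\gtrless B$ and a separate treatment of the resulting $t$-integral near $t=0$ and the $x$-integral near infinity, whereas you integrate out $t$ at fixed $r$, obtain an inner $x$-integral bounded uniformly in $r$, and are left with the single condition $\beta=\kappa+n+2(1-\lambda)>-1$. The two routes produce the identical exponent condition (the paper's $\frac{\kappa+n+3(1-\lambda)}{\lambda}>-1$ is exactly $\beta>-1$), so nothing is lost, and your version is somewhat cleaner bookkeeping. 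What your writeup adds beyond the paper is the explicit observation that the $t=0$ slice of $\rho u^3$ (and likewise $\rho c^2u$) behaves like $r^{\kappa+3(1-\lambda)}$ and fails to be in $L^1_{loc}(r^m dr)$ once $\lambda\geq 1+\frac{n(\gamma-1)}{3\gamma-1}$, so the slice-plus-continuity argument of Corollary \ref{L^1_loc_t_x_1} genuinely cannot be recycled here; this is a correct and illuminating remark that the paper does not make. One tiny imprecision: the integrand is not continuous away from $r=0$ (it jumps across $x=B$), but it is bounded and measurable there, which is all your argument uses.
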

\begin{proof}
	We present the details for the representative case of $\rho c^2 u$. Recall that, by construction 
	(see Section \ref{constr}) the functions $V(x)$ and $C(x)$ suffer a discontinuity
	at $x=B>0$, corresponding to the expanding shock wave generated at $t=0$. To establish the 
	claim it suffices to show that 
	\[I:=\int_{-T}^T\int_0^\br \rho c^2 u r^m\, drdt<\infty\]
	whenever $\br>0$ and $T=B\br^\lambda$. According to \eq{alt_sim_vars} and \eq{rho_c} 
	we have
	\[I\lesssim \int_{-T}^T\int_0^\br r^{\kappa+m+3(1-\lambda)}
	\left|\frac{C(x)}{x}\right|^\frac{2\gamma}{\gamma-1}\left|\frac{V(x)}{x}\right|\, drdt,\]
	where $x=\frac{t}{r^\lambda}$. Changing integration variable from $r$ to $x$ and splitting
	into two parts corresponding to $t\gtrless 0$, yield
	\[I\lesssim \left\{\int_{-T}^0\int_{-\infty}^{t/\br^\lambda} 
	+ \int_0^T\int_{t/\br^\lambda}^{\infty} \right\}r^{\kappa+m+3(1-\lambda)}
	\left|\frac{C(x)}{x}\right|^\frac{2\gamma}{\gamma-1}\left|\frac{V(x)}{x}\right|
	\frac{|t|^\frac{1}{\lambda}}{|x|^{1+\frac{1}{\lambda}}}\, dxdt=:I_1+I_2,\]
	where $r=|\frac{t}{x}|^\frac{1}{\lambda}$. According to \eq{VC_props} $V(x)$ and $C(x)$ 
	satisfy the same bounds at $x=\pm\infty$ and also at $x=0\pm$. It therefore suffices to 
	consider $I_2$. Substituting for $r$ and splitting the $x$-integration into $x\gtrless B$,
	we have (recall that $T=B\br^\lambda$)
	\[I_2= \int_0^T t^\frac{\kappa+n+3(1-\lambda)}{\lambda}\left(
		\left\{\int_{t/\br^\lambda}^B+\int_{B}^{\infty}\right\}
		\left|\frac{C(x)}{x}\right|^\frac{2\gamma}{\gamma-1}\left|\frac{V(x)}{x}\right|
		\, \frac{dx}{|x|^{1+\frac{\kappa+n+3(1-\lambda)}{\lambda}}}\right)dt=:J_1+J_2.\]
	Applying the bounds in \eq{VC_props} we have
	\[J_1\lesssim \int_0^T t^\frac{\kappa+n+3(1-\lambda)}{\lambda}\left(
		\int_{t/\br^\lambda}^B\frac{dx}{|x|^{1+\frac{\kappa+n+3(1-\lambda)}{\lambda}}}\right)dt
		\lesssim 1+ \int_0^T t^\frac{\kappa+n+3(1-\lambda)}{\lambda}\, dt.\]
	The last integral is bounded provided $\frac{\kappa+n+3(1-\lambda)}{\lambda}>-1$,
	and a direct calculation shows that this is a consequence of the standing assumption
	that $\lambda<\bar\lambda(\gamma,n)$. Thus $J_1<\infty$. For $J_2$, the bounds in 
	\eq{VC_props} yield 
	\[J_2\lesssim \int_0^Tt^\frac{\kappa+n+3(1-\lambda)}{\lambda}\left(
	\int_B^\infty x^\frac{\kappa(\gamma-1)-2\lambda-n-3(1-\lambda)}{\lambda}\, dx\right)dt.\]
	It is straightforward to verify that $\frac{\kappa(\gamma-1)-2\lambda-n-3(1-\lambda)}{\lambda}<-1$
	(whenever $\lambda>0$),
	so that the last $x$-integral is bounded. The finiteness of $J_2$ now follows as for $J_1$.
	This shows that $I<\infty$, so that $\rho c^2 u\in L^1_{loc}(dt\times r^mdr)$.
\end{proof}
With Lemma \ref{L^1_loc_t_x_2} conditions (i) and (ii) of Definition \ref{rad_symm_weak_soln}
are verified for the radial similarity solutions constructed in Section \ref{constr}.

\subsection{Weak forms of the equations}\label{weak_forms}
The strategy for verifying the weak forms of the conservation laws
is the same for each of \eq{radial_mass_weak}-\eq{radial_ener_weak}. 
Also, since $\kappa$ takes the isentropic value in \eq{params}, the arguments
for the mass and the momentum equations \eq{radial_mass_weak}-\eq{radial_mom_weak}
reduce to those given in \cite{jt3}. We therefore present the details only for
verifying \eq{radial_ener_weak}.

For this, fix a test function $\psi\in C^1_c(\RR\times\RR^+_0)$ with 
$\supp\psi\subset[-T,T]\times [0,\br]$. By increasing $\br$ or 
$T$ if necessary we may assume without loss of generality that $T=B\br^\lambda$.
Next, for any $\delta\in(0,\br)$ we define the open regions (see Figure 2)
\[J_\delta=\left\{(t,r)\,|\, -T<t<T,\, \delta<r<\br,\, \textstyle\frac{t}{r^\lambda}<B \right\},\]
and
\[K_\delta=\left\{(t,r)\,|\, 0<t<T,\, \delta<r<\br,\,\textstyle\frac{t}{r^\lambda}>B \right\}.\] 
With $E:=e+ \frac{1}{2}u^2$ and
\begin{align}
	\mathcal E(\psi)&:=\iint_{\RR\times\RR^+} \left[\rho E\psi_t+(\rho E+p)u\psi_r\right]\, r^mdrdt \nn\\
	&= \Big\{\iint_{\RR\times [0,\delta]} +\iint_{J_\delta} +\iint_{K_\delta}\Big\}
	\left[\rho E\psi_t+(\rho E+p)u\psi_r\right]\,  r^mdrdt \nn\\
	&=:\mathcal E_\delta(\psi)
	+\Big\{\iint_{J_\delta} +\iint_{K_\delta}\Big\}
	\left[\rho E\psi_t+(\rho E+p)u\psi_r\right]\,  r^mdrdt,
	\label{E_psi}
\end{align}
the claim is that $\mathcal E(\psi)$ vanishes. We shall establish this by showing 
that the right hand side of \eq{E_psi} tends to zero as $\delta\downarrow 0$. 
\begin{figure}\label{Figure_3}
	\centering
	\includegraphics[width=7cm,height=9cm]{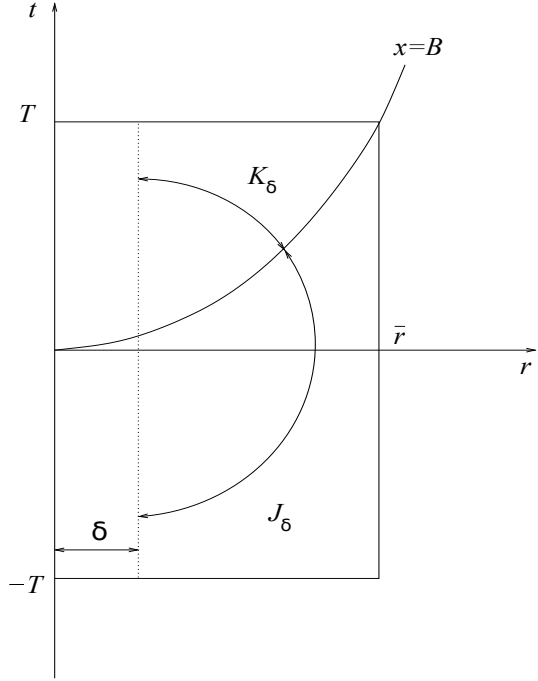}
	\caption{Regions of integration for verifying the weak form of the conservation laws.}
\end{figure} 
First, according to the properties established in Section \ref{int_cont}
the integrand of $\mathcal E(\psi)$ is locally integrable. It follows that  
$\mathcal E_\delta(\psi)$ tends to zero as $\delta\downarrow 0$. 

It remains to estimate the integrals over $J_\delta$ and $K_\delta$ in \eq{E_psi}. 
For this we use that $(\rho,u,c)$, by construction, is a 
classical (Lipschitz continuous) solution of the full Euler system \eq{m_eul}-\eq{ener_eul}
within each of the open regions $J_\delta$ and $K_\delta$. In particular, it follows that the 
energy equation
\[(\rho E r^m)_t+[(\rho E+p)u r^m]_r=0\]
is satisfied classically in each of $J_\delta$ and $K_\delta$. 
Also by construction, the Rankine-Hugoniot jump relations are satisfied 
across their common boundary along the curve 
$r=(\frac{t}{B})^\frac{1}{\lambda}$. 
Applying the divergence theorem to each region we therefore obtain
\beq\label{part1}
	\Big\{\iint_{J_\delta} +\iint_{K_\delta}\Big\}\left[\rho E\psi_t+(\rho E+p)u\psi_r\right]\, r^mdrdt
	=\delta^m\int_{-T}^T[(\rho E+p)u\psi](t,\delta)\, dt=:\mathcal I_\delta.
\eeq
Using \eq{alt_sim_vars} and \eq{rho_c}, and changing integration variable from $t$ to $x$, 
we get that
\[\left|\mathcal I_\delta\right|
\lesssim \delta^{m+\lambda+\kappa+3(1-\lambda)}
\left\{\int_{|x|\leq 1}+\int_{1\leq|x|\leq\frac{T}{\delta^\lambda}}\right\}
\textstyle\left|\frac{C(x)}{x}\right|^\frac{2}{\gamma-1}\left|\frac{V(x)}{x}\right|
\left(\left|\frac{V(x)}{x}\right|^2+\left|\frac{C(x)}{x}\right|^2\right)\, dx,\]
where we have split up the $x$-integration into two parts (assuming  $\delta<T^\frac{1}{\lambda}$).
Making use of \eq{VC_props}, and in particular that $\left|\frac{C(x)}{x}\right|$ 
dominates $\left|\frac{V(x)}{x}\right|$ for large values of $|x|$, we obtain 
\beq\label{1st_part}
	\left|\mathcal I_\delta\right|\lesssim 
	\delta^{n+\kappa+2(1-\lambda)}\left\{1+\int_1^{T/\delta^\lambda}x^{\frac{\kappa\gamma}{\lambda}-1}\, dx\right\}
	\lesssim \delta^{n+\kappa+2(1-\lambda)}\left\{1+\delta^{-\kappa\gamma}\right\}
	\lesssim \delta^{n+\kappa+2(1-\lambda)},
\eeq
where in the last step we have used that $\kappa<0$ and $\delta$ is small. 
Finally, it is immediate to verify that the condition $n+\kappa+2(1-\lambda)>0$ 
is equivalent to the standing requirement $\lambda<\bar\lambda(\gamma,n)$. 
This shows that the right-hand side of \eq{E_psi} vanishes as $\delta\downarrow 0$,
so that the weak form of the energy equation \eq{radial_ener_weak} is satisfied. 
\qed

This verifies condition (iii) of Definition \ref{rad_symm_weak_soln} and 
concludes the demonstration of our Main Result.

\begin{remark}
        It follows from Lemma \ref{L^1_loc} that the solutions under consideration
        have locally finite mass, momentum, and energy at all times. However, their 
        total mass, momentum, and energy are unbounded; e.g., at time $t=0$, this 
        is an immediate consequence of \eq{at_collapse} and the restrictions \eq{params}
        on the values of $\kappa$ and $\lambda$.
        On the other hand, as in the isentropic case \cite{jt3},
        the solutions can be altered outside a bounded set so as
        to give examples of amplitude blowup from initial data with finite mass, 
        momentum, and energy. This is a consequence of the fact that 1-characteristics 
        starting at negative times from points along $x\equiv \bar x<0$ cross the
        $r$-axis at strictly positive locations at time of collapse. It also follows from this
        that the same holds for particle trajectories. In particular, mass does not
        ``accumulate'' at the origin and there is never a Dirac distribution present in the density field.
\end{remark}

\bigskip
\paragraph{\bf Acknowledgment.}
This material is based in part upon work supported by the National Science Foundation under Grant Numbers DMS-1813283 (Jenssen) and DMS-1714912 (Tsikkou). Any opinions, findings, and conclusions or recommendations expressed in this material are those of the authors and do not necessarily reflect the views of the National Science Foundation.

\begin{bibdiv}
\begin{biblist}
\bib{am}{book}{
   author={Atzeni, S.},
   author={Meyer-ter-Vehn, J.},  
   title={The Physics of Inertial Fusion},
   series={International Series of Monographs on Physics},
   volume={125},
   publisher={Oxford University Press, Oxford},
   date={2004},
}
\bib{b_c-l_g_s}{article}{
   author={Tristan Buckmaster, Tristan},
   author={Cao-Labora, Gonzalo}, 
   author={Gómez-Serrano, Javier},
   title={Smooth imploding solutions for 3D compressible fluids},
   journal={https://arxiv.org/abs/2208.09445},
   date={2022},
}
\bib{cf}{book}{
   author={Courant, R.},
   author={Friedrichs, K. O.},
   title={Supersonic flow and shock waves},
   note={Reprinting of the 1948 original;
   Applied Mathematical Sciences, Vol. 21},
   publisher={Springer-Verlag},
   place={New York},
   date={1976},
   pages={xvi+464},
   review={\MR{0421279 (54 \#9284)}},
}
\bib{daf}{book}{
   author={Dafermos, Constantine M.},
   title={Hyperbolic conservation laws in continuum physics},
   series={Grundlehren der Mathematischen Wissenschaften [Fundamental
   Principles of Mathematical Sciences]},
   volume={325},
   edition={4},
   publisher={Springer-Verlag, Berlin},
   date={2016},
   pages={xxxviii+826},
   isbn={978-3-662-49449-3},
   isbn={978-3-662-49451-6},
   review={\MR{3468916}},
   doi={10.1007/978-3-662-49451-6},
}
\bib{dm}{book}{
   author={Duderstadt, J.},
   author={Moses, G.},  
   title={Inertial Confinement Fusion},
   publisher={Wiley},
   date={1982},
}
\bib{gud}{article}{
   author={Guderley, G.},
   title={Starke kugelige und zylindrische Verdichtungsst\"{o}sse in der N\"{a}he
   des Kugelmittelpunktes bzw. der Zylinderachse},
   language={German},
   journal={Luftfahrtforschung},
   volume={19},
   date={1942},
   pages={302--311},
   review={\MR{0008522}},
}
\bib{hun_60}{article}{
   author={Hunter, C.},
   title={On the collapse of an empty cavity in water},
   journal={J. Fluid Mech.},
   volume={8},
   date={1960},
   pages={241--263},
}
\bib{jj_22}{article}{
   author={Jenssen, Helge Kristian},
   author={Johnson, Alexander Anthony},
   title={New Self-similar Euler Flows: gradient catastrophe without shock formation},
   journal={arXiv:2205.15876},
   date={2022},
}
\bib{jt1}{article}{
   author={Jenssen, Helge Kristian},
   author={Tsikkou, Charis},
   title={On similarity flows for the compressible Euler system},
   journal={J. Math. Phys.},
   volume={59},
   date={2018},
   number={12},
   pages={121507, 25},
   issn={0022-2488},
   review={\MR{3894017}},
   doi={10.1063/1.5049093},
}
\bib{jt2}{article}{
   author={Jenssen, Helge Kristian},
   author={Tsikkou, Charis},
   title={Multi-d isothermal Euler flow: existence of unbounded radial
   similarity solutions},
   journal={Phys. D},
   volume={410},
   date={2020},
   pages={132511, 14},
   issn={0167-2789},
   review={\MR{4091348}},
   doi={10.1016/j.physd.2020.132511},
}
\bib{jt3}{article}{
   author={Jenssen, Helge Kristian},
   author={Tsikkou, Charis},
   title={Amplitude blowup in radial isentropic Euler flow},
   journal={SIAM J. Appl. Math.},
   volume={80},
   date={2020},
   number={6},
   pages={2472--2495},
   issn={0036-1399},
   review={\MR{4181105}},
   doi={10.1137/20M1340241},
}
\bib{laz}{article}{
   author={Lazarus, Roger B.},
   title={Self-similar solutions for converging shocks and collapsing
   cavities},
   journal={SIAM J. Numer. Anal.},
   volume={18},
   date={1981},
   number={2},
   pages={316--371},
}
\bib{mrrs1}{article}{
   author={Merle, Frank},
   author={Rapha\"{e}l, Pierre},
   author={Rodnianski, Igor},
   author={Szeftel, Jeremie},
   title={On smooth self similar solutions to the compressible Euler equations},
   journal={arXiv:1912.10998},
   date={2019},
}
\bib{mrrs2}{article}{
   author={Merle, Frank},
   author={Rapha\"{e}l, Pierre},
   author={Rodnianski, Igor},
   author={Szeftel, Jeremie},
   title={On the implosion of a three dimensional compressible fluid},
   journal={arXiv:1912.11009},
   date={2020},
}
\bib{p}{book}{
   author={Pfalzner, S.},
   title={An Introduction to Inertial Confinement Fusion},
   series={Series in Plasma Physics},
   publisher={CRC Press},
   date={2006},
}
\bib{rj}{book}{
   author={Ro\v{z}destvenski\u{\i}, B. L.},
   author={Janenko, N. N.},
   title={Systems of quasilinear equations and their applications to gas
   dynamics},
   series={Translations of Mathematical Monographs},
   volume={55},
   note={Translated from the second Russian edition by J. R. Schulenberger},
   publisher={American Mathematical Society, Providence, RI},
   date={1983},
   pages={xx+676},
   isbn={0-8218-4509-8},
   review={\MR{694243}},
   doi={10.1090/mmono/055},
}
\bib{sed}{book}{
   author={Sedov, L. I.},
   title={Similarity and dimensional methods in mechanics},
   note={Translated from the Russian by V. I. Kisin},
   publisher={``Mir'', Moscow},
   date={1982},
   pages={424},
   review={\MR{693457}},
}
\bib{stan}{book}{
   author={Stanyukovich, K. P.},
   title={Unsteady motion of continuous media},
   series={Translation edited by Maurice Holt; literal translation by J.
   George Adashko},
   publisher={Pergamon Press, New York-London-Oxford-Paris},
   date={1960},
   pages={xiii+745},
   review={\MR{0114423}},
}
\end{biblist}
\end{bibdiv}

\end{document}